\newcommand\versionInfo{\svnToday:Revision~\svnInfoRevision}
\title{Resource Allocation using Virtual Clusters}
\author{
Mark Stillwell$^{1}$ \and David Schanzenbach$^{1}$ \and
Fr\'ed\'eric Vivien$^{2,3,4,1}$ \and Henri Casanova$^{1}$
~\\
$^1$ Department of Information and Computer Sciences, \\
University of Hawai`i at M\={a}noa, Honolulu, U.S.A.\\
$^2$ INRIA, France\\
$^3$ Universit\'e de Lyon, France\\
$^4$ LIP, UMR 5668 ENS-CNRS-INRIA-UCBL, Lyon, France
}
\newif\ifremark
\long\def\remark#1{
\ifremark%
        \begingroup%
        \dimen0=\columnwidth
        \advance\dimen0 by -1in%
        \setbox0=\hbox{\parbox[b]{\dimen0}{\protect\em #1}}
        \dimen1=\ht0\advance\dimen1 by 2pt%
        \dimen2=\dp0\advance\dimen2 by 2pt%
        \vskip 0.25pt%
        \hbox to \columnwidth{%
                \vrule height\dimen1 width 3pt depth\dimen2%
                \hss\copy0\hss%
                \vrule height\dimen1 width 3pt depth\dimen2%
        }%
        \endgroup%
\fi}
\newtheorem{theorem}{Theorem}
\begin{document}

\newcommand{\I}{\mathcal{I}}
\newcommand{\probzero}{\textsc{VCSched}\xspace}
\newcommand{\threepart}{\textsc{3-Partition}\xspace}
\newcommand{\twopart}{\textsc{2-Partition}\xspace}
\newcommand{\probzerodec}{\textsc{VCSched-Dec}\xspace}
\newcommand{\N}{\mathbb{N}}
\newcommand{\Q}{\mathbb{Q}}

\remarktrue

\begingroup
\maketitle
\thispagestyle{fancy}
\endgroup

\vspace{-0.3in}
\begin{abstract}
In this report we demonstrate the potential utility of resource allocation
management systems that use virtual machine technology for sharing parallel
computing resources among competing jobs. We formalize the resource allocation
problem with a number of underlying assumptions, determine its complexity,
propose several heuristic algorithms to find near-optimal solutions, and
evaluate these algorithms in simulation. We find that among our algorithms one
is very efficient and also leads to the best resource allocations. We then
describe how our approach can be made more general by removing several of the
underlying assumptions.
\end{abstract}

\section{Introduction}

The use of commodity clusters has become mainstream for high-performance
computing applications, with more than 80\% of today's fastest supercomputers
being clusters~\cite{top500}.  Large-scale data
processing~\cite{map-reduce-osdi04, hadoop, isard-dryad-eurosys07} and service
hosting~\cite{ibm-blue-cloud, amazon-ec2} are also common applications.  These
clusters represent significant equipment and infrastructure investment, and
having a high rate of utilization is key for justifying their ongoing costs
(hardware, power, cooling, staff)~\cite{koomey-server-power07,
epa-server-power07}. There is therefore a strong incentive to share these
clusters among a large number of applications and users.

The sharing of compute resources among competing instances of applications, or
\emph{jobs}, within a single system has been supported by operating systems for
decades via time-sharing.  Time-sharing is implemented with rapid
context-switching and is motivated by a need for interactivity. A fundamental
assumption is that there is no or little a-priori knowledge regarding the
expected workload, including expected durations of running processes.  This is
very different from the current way in which clusters are shared. Typically,
users request some fraction of a cluster for a specified duration. In the
traditional high-performance computing arena, the ubiquitous approach is to use
``batch scheduling'', by which jobs are placed in queues waiting to gain
exclusive access to a subset of the platform for a bounded amount of time. In
service hosting or cloud environments, the approach is to allow users to lease
``virtual slices'' of physical resources, enabled by virtual machine technology.
The latter approach has several advantages, including O/S customization and
interactive execution. In general resource sharing among competing jobs is
difficult because jobs have different resource requirements (amount of
resources, time needed) and because the system cannot accommodate all jobs at
once.

An important observation is that both resource allocation approaches mentioned
above dole out integral subsets of the resources, or \emph{allocations} (e.g.,
10 physical nodes, 20 virtual slices), to jobs. Furthermore, in the case of
batch scheduling, these subsets cannot change throughout application execution.
This is a problem because most applications do not use all resources allocated
to them at all times. It would then be useful to be able to decrease and
increase application allocations on-the-fly (e.g., by removing and adding more
physical cluster nodes or virtual slices during execution). Such application are
termed ``malleable'' in the literature. While solutions have been studied to
implement and to schedule malleable applications~\cite{malleable_1, malleable_2,
malleable_3, malleable_4, malleable_5}, it is often difficult to make sensible
malleability decisions at the application level.  Furthermore, many applications
are used as-is, with no desire or possibility to re-engineer them to be
malleable. As a result sensible and automated malleability is rare in real-world
applications. This is perhaps also due to the fact that production batch
scheduling environments do not provide mechanisms for dynamically increasing or
decreasing allocations. By contrast, in service hosting or cloud environments,
acquiring and relinquishing virtual slices is straightforward and can be
implemented via simple mechanisms. This provides added motivation to engineer
applications to be malleable in those environments. 

Regardless, an application that uses only 80\% of a cluster node or of a virtual
slice would need to relinquish only 20\% of this resources. However, current
resource allocation schemes allocate integral numbers of resources (whether
these are physical cluster nodes or virtual slices). Consequently, many
applications are denied access to resources, or delayed, in spite of cluster
resources not being fully utilized by the applications that are currently
executing, which hinders both application throughput and cluster utilization.

The second limitation of current resource allocation schemes stems from the fact
that resource allocation with integral allocations is difficult from a
theoretical perspective~\cite{bender_soda98}.  Resource allocation problems are
defined formally as the optimizations of well-defined objective functions. Due
to the difficulty (i.e., NP-hardness) of resource allocation for optimizing an
objective function, in the real-world no such objective function is optimized.
For instance, batch schedulers instead provide a myriad of configuration
parameters by which a cluster administrator can tune the scheduling behavior
according to ad-hoc rules of thumb. As a result, it has been noted that there is
a sharp disconnect between the desires of users (low application turn-around
time, fairness) and the schedules computed by batch
schedulers~\cite{schwiegelshohn_sche00, snavely_07}. It turns out that cluster
administrators often attempt to maximize cluster utilization. But recall that,
paradoxically, current resource allocation schemes inherently hinder cluster
utilization!

A notable finding in the theoretical literature is that with job preemption
and/or migration there is more flexibility for resource allocation. In this case
certain resource allocation problems become (more) tractable or
approximable~\cite{baker_1974, schwiegelshohn_sche00, vivien_2008, muthu_1999,
bender_jos2004}. Unfortunately, preemption and migration are rarely used on
production parallel platforms. The gang scheduling~\cite{gang_scheduling}
approach allows entire parallel jobs to be context-switched in a synchronous
fashion.  Unfortunately, a known problem with this approach is the overhead of
coordinated context switching on a parallel platform.  Another problem is the
memory pressure due to the fact that cluster applications often use large
amounts of memory, thus leading to costly swapping between memory and
disk~\cite{gang_scheduling_mem}. Therefore, while flexibility in resource
allocations is desirable for solving resource allocation problems, affording
this flexibility has not been successfully accomplished in production systems.  

In this paper we argue that both limitations of current resource allocation
schemes, namely, reduced utilization and lack of an objective function, can be
addressed simultaneously via fractional and dynamic resource allocations enabled
by state-of-the-art virtual machine (VM) technology.  Indeed, applications
running in VM instances can be monitored so as to discover their resource needs,
and their resource allocations can be modified dynamically (by appropriately
throttling resource consumption and/or by migrating VM instances). Furthermore,
recent VM technology advances make the above possible with low overhead.
Therefore, it is possible to use this technology for resource allocation based
on the optimization of sensible objective functions, e.g., ones that capture
notions of performance and fairness.

Our contributions are:
\begin{itemize}
\item We formalize a general resource allocation problem based on a number of
      assumptions regarding the platform, the workload, and the underlying VM
      technology;
\item We establish the complexity of the problem and propose algorithms to solve
      it;
\item We evaluate our proposed algorithms in simulation and identify an
      algorithm that is very efficient and leads to better resource allocations
      than its competitors;
\item We validate our assumptions regarding the capabilities of VM technology;
\item We discuss how some of our other assumptions can be removed and our
      approach adapted to parallel jobs and dynamic jobs.
\end{itemize}

This paper is organized as follows. In Section~\ref{sec.overview} we define and
formalize our target problem, we list our assumptions for the base problem, and
we establish its NP-hardness. In Section~\ref{sec.algs} we propose algorithms
for solving the base problem and evaluate these algorithms in simulation in
Section~\ref{sec.results}. Sections~\ref{sec.parallel} and~\ref{sec.dynamic}
study the resource sharing problem with relaxed assumptions regarding the nature
of the workload, thereby handling parallel and dynamic workloads. In
Section~\ref{sec.vmtech} we validate our fundamental assumption that VM
technology allows for precise resource sharing. Section~\ref{sec.related}
discusses related work.  Section~\ref{sec.future} discusses future directions.
Finally, Section~\ref{sec.conclusion} concludes the paper with a summary of our
findings.

\section{Flexible Resource Allocation}
\label{sec.overview}

\begin{figure*}
  \centering
  \includegraphics[width=0.90\textwidth]{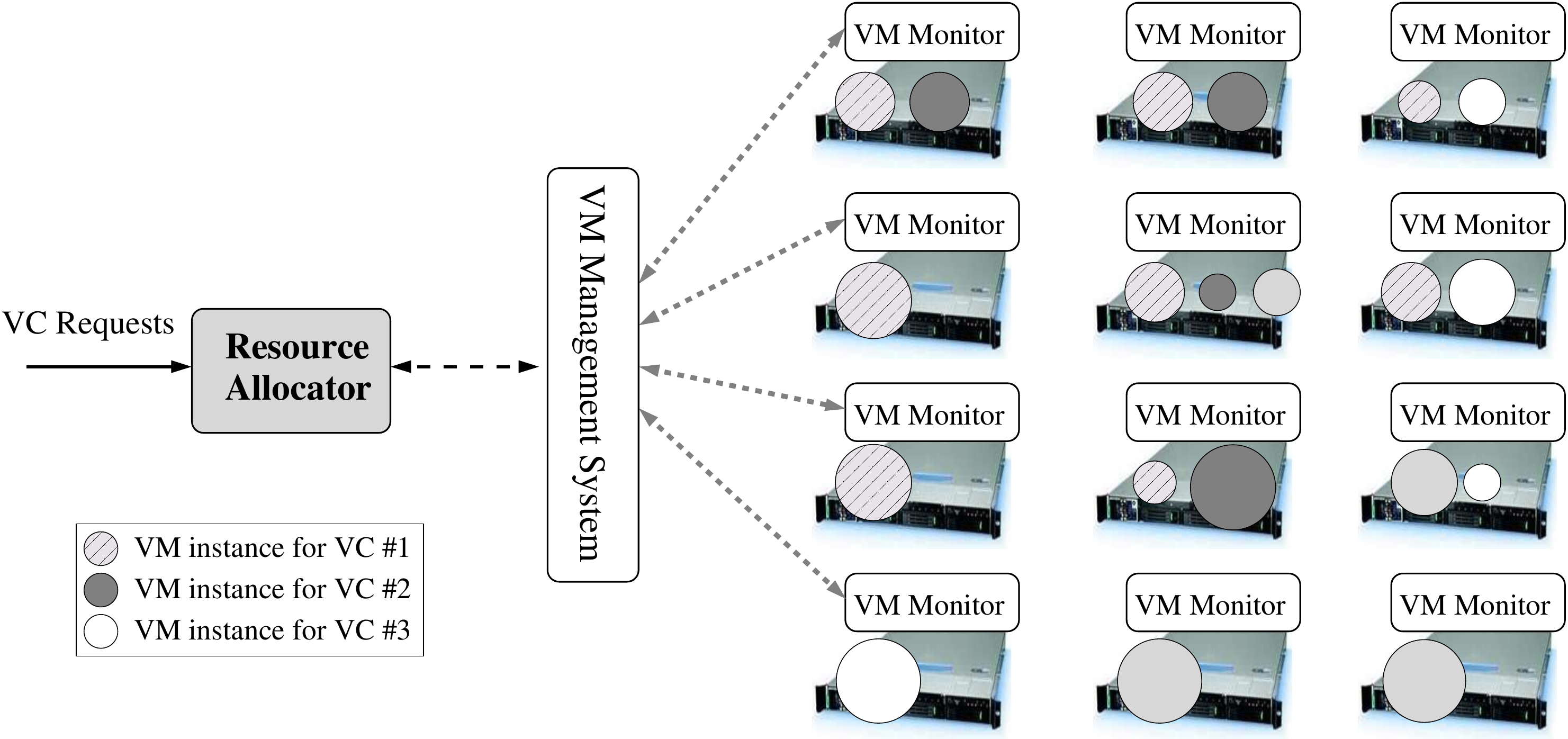}
  \caption{System architecture with 12 homogeneous physical hosts and 3 running
  virtual clusters.}
  \label{fig.system}
\end{figure*}

\subsection{Overview}

In this work we consider a homogeneous cluster platform, which is managed by a
resource allocation system. The architecture of this system is depicted in
Figure~\ref{fig.system}. Users submit job requests, and the system responds by
creating sets of VM instances, or ``virtual clusters'' (VC) to run the jobs.
These instances run on physical hosts that are each under the control of a VM
monitor~\cite{barham03xav, VMWare, Virtual_PC}. The VM monitor can enforce
specific resource consumption rates for different VMs running on the host. All
VM monitors are under the control of a VM management system that can specify
resource consumption rates for VM instances running on the physical cluster.
Furthermore, the VM resource management system can enact VM instance migrations
among physical hosts. An example of such a system is the Usher
project~\cite{Usher}. Finally, a Resource Allocator (RA) makes decisions
regarding whether a request for a VC should be rejected or admitted, regarding
possible VM migrations, and regarding resource consumption rates for each VM
instance.

Our overall goal is to design algorithms implemented as part of the RA that make
all virtual clusters ``play nice'' by allowing fine-grain tuning of their
resource consumptions. The use of VM technology is key for increasing cluster
utilization, as it makes is possible to allocate to VCs only the resources they
need when they need them.  The mechanisms for allowing on-the-fly modification
of resource allocations are implemented as part of the VM Monitors and the VM
Management System.

A difficult question is how to define precisely what ``playing nice'' means, as
it should encompass both notions of individual job performance and notions of
fairness among jobs. We address this issue by defining a performance metric that
encompasses both these notions and that can be used to value resource
allocations. The RA may be configured with the goal of optimizing this metric
but at the same time ensuring that the metric across the jobs is above some
threshold (for instance by rejecting requests for new virtual clusters). More 
generally, a key aspect of our approach is that it can be combined with resource
management and accounting techniques. For instance, it is straightforward to add
notions of user priorities, of resource allocation quotas, of resource
allocation guarantees, or of coordinated resource allocations to VMs belonging
to the same VC. Furthermore, the RA can reject or delay VC requests if the
performance metric is below some acceptable level, to be defined by cluster
administrators.

\subsection{Assumptions}
\label{sec.assumptions}

We first consider the resource sharing problem using the following six
assumptions regarding the workload, the physical platform, and the VM technology
in use:

\begin{description}
\item [(H1)] Jobs are CPU-bound and require a given amount of memory to be able
             to run;
\item [(H2)] Job computational power needs and memory requirements are known;
\item [(H3)] Each job requires only one VM instance;
\item [(H4)] The workload is static, meaning jobs have constant resource
             requirements; furthermore, no job enters or leaves the system; 
\item [(H5)] VM technology allows for precise, low-overhead, and quickly
             adaptable sharing of the computational capabilities of a host
             across CPU-bound VM instances.
\end{description}

These assumptions are very stringent, but provide a good framework to formalize
our resource allocation problem (and to prove that it is difficult even with
these assumptions). We relax assumption H3 in Section~\ref{sec.parallel}, that
is, we consider parallel jobs.  Assumption H4 amounts to assuming that jobs have
no time horizons, i.e., that they run forever with unchanging requirements. In
practice, the resource allocation may need to be modified when the workload
changes (e.g., when a new job arrives, when a job terminates, when a job starts
needing more/fewer resources). In Section~\ref{sec.dynamic} we relax assumption
H4 and extend our approach to allow allocation adaptation. We validate
assumption H5 in Section~\ref{sec.vmtech}. We leave relaxing H1 and H2 for
future work, and discuss the involved challenges in
Section~\ref{sec.conclusion}.

\subsection{Problem Statement}
\label{sec.problem_statement}

We call the resource allocation problem described in the previous section
\probzero and define it here formally. Consider $H>0$ identical physical hosts
and $J>0$ jobs. For job $i$, $i=1,\ldots,J$, let $\alpha_i$ be the (average)
fraction of a host's computational capability utilized by the job if alone on a
physical host, $0 \leq \alpha_i \leq 1$. (Alternatively, this fraction could be
specified a-priori by the user who submitted/launched job $i$.) Let $m_{i}$ be
the maximum fraction of a host's memory needed by job $i$, $0 \leq m_i \leq 1$.
Let $\alpha_{ij}$ be the fraction of the computational capability of host $j$,
$j=1,\ldots,H$, allocated to job $i$, $i=1,\ldots,J$. We have
$0 \leq \alpha_{ij} \leq 1$. If $\alpha_{ij}$ is constrained to be an integer,
that is either $0$ or $1$, then the model is that of scheduling with exclusive
access to resources.  If, instead, $\alpha_{ij}$ is allowed to take rational
values between $0$ and $1$, then resource allocations can be fractional and thus
more fine-grain.

\paragraph{Constraints --}
We can write a few constraints due to resource limitations.  We have
\begin{equation*}
\forall j \quad \sum_{i=1}^{J} \alpha_{ij} \leq 1 \;,
\end{equation*}
which expresses the fact that the total CPU fraction allocated to jobs on any
single host may not exceed 100\%. Also, a job should not be allocated
more resource than it can use:
\begin{equation*}
\forall i \quad \sum_{j=1}^{H} \alpha_{ij} \leq \alpha_{i} \;,
\end{equation*}

\noindent
Similarly,
\begin{equation}
\label{eq.nonlinear1}
\forall j \quad \sum_{i=1}^{J}\lceil\alpha_{ij}\rceil m_i \leq 1 \;,
\end{equation}
since at most the entire memory on a host may be used. 

With assumption H3, a job requires only one VM instance.  Furthermore, as
justified hereafter, we assume that we do not use migration and that a job can
be allocated to a single host. Therefore, we write the following constraints:
\begin{equation}
\label{eq.nonlinear2} 
\forall i \quad \sum_{j=1}^{H} \lceil\alpha_{ij}\rceil = 1\;,
\end{equation} 
which state that for all $i$ only one of the $\alpha_{ij}$ values is non-zero.

\paragraph{Objective function --}
We wish to optimize a performance metric that encompasses both notions of
performance and of fairness, in an attempt at designing the scheduler from the
start with a user-centric metric in mind (unlike, for instance, current batch
schedulers). In the traditional parallel job scheduling literature, the metric
commonly acknowledged as being a good measure for both performance and fairness
is the \emph{stretch} (also called ``slowdown'')~\cite{bender_soda98,
feitelson_slowdown}. The stretch of a job is defined as the job's turn-around
time divided by the turn-around time that would have been achieved had the job
been alone in the system.

This metric cannot be applied directly in our context because jobs have no time
horizons. So, instead, we use a new metric, which we call the \emph{yield} and
which we define for job $i$ as $\sum_{j}\alpha_{ij}/\alpha_{i}$. The yield of a
job represents the fraction of its maximum achievable compute rate that is
achieved (recall that for each $i$ only one of the $\alpha_{ij}$ is non-zero). A
yield of $1$ means that the job consumes compute resources at its peak rate. We
can now define problem \probzero as maximizing the minimum yield in an attempt
at optimizing both performance and fairness (similar in spirit to minimizing the
maximum stretch~\cite{bender_soda98,vivien_2008}). Note that we could easily
maximize the average yield instead, but we may then decrease the fairness of the
resource allocation across jobs as average metrics are
starvation-prone~\cite{vivien_2008}. Our approach is agnostic to the particular
objective function (although some of our results hold only for linear objective
functions). For instance, other ways in which the stretch can be optimized have
been proposed~\cite{Bansal_sigmetric_2001} and could be adapted for our yield
metric.

\paragraph{Migration --}
The formulation of our problem precludes the use of migration. However, as when
optimizing job stretch, migration could be used to achieve better results.
Indeed, assuming that migration can be done with no overhead or cost whatsoever,
migrating tasks among hosts in a periodic steady-state schedule afford more
flexibility for resource sharing, which could in turn be used to maximize the
minimum yield further. For instance, consider 2 hosts and 3 tasks, with
$\alpha_1 = \alpha_2 = \alpha_3 = 0.6$. Without migration the optimal minimum
yield is $0.5/0.6 \sim .83$ (which corresponds to an allocation in which two
tasks are on the same host and each receive 50\% of that host's computing
power). With migration it is possible to do better. Consider a periodic schedule
that switches between two allocations, so that on average the schedule uses each
allocation 50\% of the time. In the first allocation tasks $1$ and $2$ share the
first host, each receiving 45\% and 55\% of the host's computing power, 
respectively, and task $3$ is on the second host by itself, thus receiving 60\%
of its compute power. In the second allocation, the situation is reversed, with
task 1 by itself on the first host and task 2 and 3 on the second host, task 2
receiving 55\% and task 3 receiving 45\%. With this periodic schedule, the
average yield of task 1 and 3 is $.5\times(.45/.60 + .60/.60) \sim .87$ , and
the average yield of task 2 is $.55 / .60 \sim .91$. Therefore the minimum yield
is $.87$, which is higher than that in the no-migration case. 

Unfortunately, the assumption that migration comes at no cost or overhead is not
realistic. While recent advances in VM migration
technology~\cite{xen-migrate-nsdi05} make it possible for a VM instance to
change host with a nearly imperceptible delay, migration consumes network
resources. It is not clear whether the pay-off of these extra migrations would
justify the added cost. It could be interesting to allow a bounded number of
migrations for the purpose of further increasing minimum yield, but for now we
leave this question for future work. We use migration only for the purpose of
adapting to dynamic workloads (see Section~\ref{sec.dynamic}).

\subsection{Complexity Analysis}

Let us consider the decision problem associated to \probzero: Is it possible to
find an allocation so that its minimum yield is above a given bound, $K$?  We
term this problem \probzerodec.  Not surprisingly, \probzerodec is NP-complete.
For instance, considering only job memory constraints and two hosts, the problem
trivially reduces to \twopart, which is known to be NP-complete in the weak
sense~\cite{GareyJohnson}. We can actually prove a stronger result: 

\begin{theorem}
\probzerodec is NP-complete in the strong sense even if host memory capacities
are infinite.
\end{theorem}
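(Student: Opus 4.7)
The plan is to reduce from \threepart, which is known to be NP-complete in the strong sense. With memory constraints dropped, constraint~\eqref{eq.nonlinear1} disappears but constraint~\eqref{eq.nonlinear2} still forces each job onto a single host. If $\pi(i)$ denotes the host of job $i$, then the yield of job $i$ equals $\alpha_{i\pi(i)}/\alpha_i$, so requiring a minimum yield of at least $K$ is equivalent to requiring $\alpha_{i\pi(i)} \geq K\alpha_i$ for every $i$. Combined with $\sum_{i:\pi(i)=j}\alpha_{ij}\leq 1$, this is feasible if and only if one can partition the jobs among the $H$ hosts so that $\sum_{i:\pi(i)=j}\alpha_i \leq 1/K$ for each $j$. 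In short, the decision problem without memory reduces to a bin-packing-style question: can $J$ items of sizes $\alpha_1,\ldots,\alpha_J$ be packed into $H$ bins of capacity $1/K$?

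Membership in NP is immediate: an allocation $(\alpha_{ij})$ is a polynomial-size witness whose feasibility and yield can be checked in polynomial time.

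For the hardness reduction, I would take an arbitrary \threepart instance: $3m$ positive integers $s_1,\ldots,s_{3m}$ with $\sum_i s_i = mB$ and $B/4 < s_i < B/2$, asking whether they can be partitioned into $m$ triples each summing to $B$. I would build a \probzerodec instance with $H=m$ hosts (memory ignored), $J=3m$ jobs with $\alpha_i = s_i/B$ and arbitrary $m_i$ (say $0$), and bound $K=1$. All numerical values in the construction are bounded by a polynomial in the \threepart numerical values, so strong NP-completeness is preserved. By the reduction above, a yield of at least $1$ exists if and only if the jobs can be partitioned across the $m$ hosts with each host receiving jobs of total $\alpha$-weight at most $1$, i.e., total $s$-weight at most $B$. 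Since the total weight is exactly $mB$, every such packing must saturate every host with weight exactly $B$.

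The only step that requires a little care — and is the main ``obstacle'' beyond the bookkeeping — is arguing that any such packing actually uses exactly three jobs per host, thereby yielding a valid \threepart solution. This follows from the bracketing $B/4 < s_i < B/2$: each host must receive at least three jobs (since any two jobs sum to less than $B$) and at most three jobs (since any four sum to more than $B$), so each host gets exactly three, and the triples partition $\{1,\ldots,3m\}$ into sets summing to $B$. The converse direction is immediate: a \threepart solution assigns three items summing to $B$ to each host, giving $\sum \alpha_i = 1$ per host and allowing the full allocation $\alpha_{ij}=\alpha_i$, hence yield $1$ for every job. This establishes the equivalence and completes the strong NP-completeness proof.
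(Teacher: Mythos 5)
Your proof is correct and follows essentially the same route as the paper: a reduction from \threepart with $H=n$ hosts, $\alpha_i = a_i/R$, $m_i=0$, and yield bound $K=1$, using the same saturation argument (total weight $nR$ forces each host's load to equal exactly $R$) in the backward direction. The extra observations you add --- the bin-packing reformulation and the explicit check that each host receives exactly three jobs --- are harmless refinements of the paper's argument rather than a different approach.
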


\begin{proof}

\probzerodec belongs to NP because a solution can easily be checked in
polynomial time. To prove NP-completeness, we use a straightforward reduction
from \threepart, which is known to be NP-complete in the strong
sense~\cite{GareyJohnson}. Let us consider, $\I_1$, an arbitrary instance of
\threepart: given $3n$ positive integers $\{a_1, \ldots, a_{3n}\}$ and a bound
$R$, assuming that $\frac{R}{4} < a_i < \frac{R}{2}$ for all $i$ and that
$\sum_{i=j}^{3n} a_j = nR$, is there a partition of these numbers into $n$
disjoint subsets $I_1, \ldots, I_n$ such that $\sum_{j\in I_i} a_j = R$ for all
$i$? (Note that $|I_i| = 3$ for all $i$.) We now build $\I_2$, an instance of
\probzero as follows. We consider $H=n$ hosts and $J=3n$ jobs. For job $j$ we
set $\alpha_j = a_j / R$ and $m_j = 0$.  Setting $m_j$ to $0$ amounts to
assuming that there is no memory contention whatsoever, or that host memories
are infinite. Finally, we set $K$, the bound on the yield, to be $1$.  We now
prove that $\I_1$ has a solution if and only if $\I_2$ has a solution.  

Let us assume that $\I_1$ has a solution. For each job $j$, we assign it to host
$i$ if $j \in I_i$, and we give it all the compute power it needs
($\alpha_{ji} = a_j / R$). This is possible because $\sum_{j\in I_i} a_j = R$,
which implies that $\sum_{j\in I_i} \alpha_{ji} = R/R \leq 1$. In other terms,
the computational capacity of each host is not exceeded. As a result, each job
has a yield of $K=1$ and we have built a solution to $\I_2$.  

Let us now assume that $\I_2$ has a solution. Then, for each job $j$ there
exists a unique $i_j$ such that $\alpha_{ji_j} = \alpha_j$, and such that
$\alpha_{ji} = 0$ for $i \neq i_j$ (i.e., job $j$ is allocated to host $i_j$).
Let us define $I_i = \{j | i_j = i\}$. By this definition, the $I_i$ sets are
disjoint and form a partition of $\{1,\ldots,3n\}$.  

To ensure that each processor's compute capability is not exceeded, we must have
$\sum_{j \in I_i} \alpha_j \leq 1$ for all $i$. However, by construction of
$\I_2$, $\sum_{j=1}^{3n} \alpha_j = n$. Therefore, since the $I_i$ sets form a
partition of $\{1,\ldots,3n\}$, $\sum_{j \in I_i} \alpha_j$ is exactly equal to
1 for all $i$. Indeed, if $\sum_{j \in I_{i_1}}\alpha_j$ were strictly lower
than $1$ for some $i_1$, then $\sum_{j \in I_{i_2}}\alpha_j$ would have to be
greater than $1$ for some $i_2$, meaning that the computational capability of a
host would be exceeded. Since $\alpha_j = a_j / R$, we obtain
$\sum_{j \in I_i} a_j = R$ for all $i$. Sets $I_i$ are thus a solution to
$\I_1$, which concludes the proof.

\end{proof}

\subsection{Mixed-Integer Linear Program Formulation}
\label{sec.milp}

It turns out that \probzero can be formulated as a mixed-integer linear program
(MILP), that is an optimization problem with linear constraints and objective
function, but with both rational and integer variables. Among the constraints
given in Section~\ref{sec.problem_statement}, the constraints in
Eq.~\ref{eq.nonlinear1} and Eq.~\ref{eq.nonlinear2} are non-linear. These
constraints can easily be made linear by introducing a binary integer variables,
$e_{ij}$, set to 1 if job $i$ is allocated to resource $j$, and to 0 otherwise.
We can then rewrite the constraints in Section~\ref{sec.problem_statement} as
follows, with $i = 1,\ldots, J$ and $j=1, \ldots, H$:
\begin{eqnarray}
\label{eq.constraints}
\forall i,j & \quad e_{ij} \in \N,\\
\forall i,j & \quad \alpha_{ij} \in \Q,\\
\forall i,j & 0 \leq e_{ij} \leq 1,\\
\forall i,j & 0 \leq \alpha_{ij} \leq e_{ij},\\
\forall i   & \sum_{j=1}^{H} e_{ij} = 1,\\
\forall j   & \sum_{i=1}^{J} \alpha_{ij} \leq 1,\\
\forall j   & \sum_{i=1}^{J} e_{ij} m_{i} \leq 1 \;\\
\forall i   & \sum_{j=1}^{H} \alpha_{ij} \leq \alpha_i \;\\
\forall i   & \sum_{j=1}^{H} \frac{\alpha_{ij}}{\alpha_{i}} \geq Y
\end{eqnarray}
Recall that $m_i$ and $\alpha_i$ are constants that define the jobs. The
objective is to maximize $Y$, i.e., to maximize the minimum yield.

\section{Algorithms for Solving \probzero}
\label{sec.algs}

In this section we propose algorithms to solve \probzero, including exact and
relaxed solutions of the MILP in Section~\ref{sec.milp} as well as ad-hoc
heuristics. We also give a generally applicable technique to improve average
yield further once the minimum yield has been maximized.

\subsection{Exact and Relaxed Solutions}

In general, solving a MILP requires exponential time and is only feasible for
small problem instances. We use a publicly available MILP solver, the Gnu Linear
Programming Toolkit (GLPK), to compute the exact solution when the problem
instance is small (i.e., few tasks and/or few hosts). We can also solve a
relaxation of the MILP by assuming that all variables are rational, converting
the problem to a LP.  In practice a rational linear program can be solved in
polynomial time. However, the resulting solution may be infeasible (namely 
because it could spread a single job over multiple hosts due to non-binary
$e_{ij}$ values), but has two important uses. First, the value of the objective
function is an upper bound on what is achievable in practice, which is useful to
evaluate the absolute performance of heuristics. Second, the rational solution
may point the way toward a good feasible solution that is computed by rounding
off the $e_{ij}$ values to integer values judiciously, as discussed in the next
section.

It turns out that we do not need a linear program solver to compute the optimal
minimum yield for the relaxed program. Indeed, if the total of job memory
requirement is not larger than the total available memory (i.e., if
$\sum_{i=1}^{J} m_i \leq H$), then there is a solution to the relaxed version of
the problem and the achieved optimal minimum yield,
$Y^{(\text{rat})}_{\text{opt}}$, can be computed easily:
\begin{equation*}
Y^{(\text{rat})}_{\text{opt}} = min\{\frac{H}{\sum_{i=1}^J \alpha_i}, 1\}.
\end{equation*}
The above expression is an obvious upper bound on the maximum minimum yield. To
show that it is in fact the optimal, we simply need to exhibit an allocation
that achieves this objective. A simple such allocation is: 
\begin{equation*}
\forall i,j \hspace*{0.1in}
e_{ij}=\frac{1}{H}\hspace*{0.1in} \text{and}
\hspace*{0.1in}\alpha_{ij}=\frac{1}{H}\alpha_i Y^{(\text{rat})}_{\text{opt}}.  
\end{equation*}

\subsection{Algorithms Based on Relaxed Solutions}

We propose two heuristics, RRND and RRNZ, that use a solution of the rational LP
as a basis and then round-off rational $e_{ij}$ value to attempt to produce a
feasible solution, which is a classical technique. In the previous section we
have shown a solution for the LP; Unfortunately, that solution has the
undesirable property that it splits each job evenly across all hosts, meaning 
that all $e_{ij}$ values are identical. Therefore it is a poor starting point
for heuristics that attempt to round off $e_{ij}$ values based on their
magnitude. Therefore, we use GLPK to solve the relaxed MILP and use the produced
solution as a starting point instead.

\noindent
{\bf Randomized Rounding (RRND) --} This heuristic first solves the LP. Then,
for each job $i$ (taken in an arbitrary order), it allocates it to a random host
using a probability distribution by which host $j$ has probability $e_{ij}$ of
being selected.  If the job cannot fit on the selected host because of memory
constraints, then that host's probability of being selected is set to zero and
another attempt is made with the relative probabilities of the remaining hosts
adjusted accordingly. If no selection has been made and every host has zero
probability of being selected, then the heuristic fails. Such a probabilistic
approach for rounding rational variable values into integer values has been used
successfully in previous work~\cite{marchal2006sss}.

\noindent{\bf Randomized Rounding with No Zero probability (RRNZ) --} This
heuristic is a slight modification of the RRND heuristic. One problem with RRND
is that a job, $i$, may not fit (in terms of memory requirements) on any of the
hosts, $j$, for which $e_{ij} > 0$, in which case the heuristic would fail to
generate a solution. To remedy this problem, we set each $e_{ij}$ value equal to
zero in the solution of the relaxed MILP to $\epsilon$ instead, where
$\epsilon << 1$ (we used $\epsilon = 0.01$).  For those problem instances for
which RRND provides a solution RRNZ should provide nearly the same solution most
of the time. But RRNZ should also provide a solution to a some instances for
which RRND fails, thus achieving a better success rate.

\subsection{Greedy Algorithms}

\noindent{\bf Greedy (GR) --} This heuristic first goes through the list of jobs
in arbitrary order. For each job the heuristic ranks the hosts according to
their total computational load, that is, the total of the maximum computation
requirements of all jobs already assigned to a host. The heuristic then selects
the first host, in non-decreasing order of computational load, for which an
assignment of the current job to that host will satisfy the job's memory
requirements.

\noindent{\bf Sorted-Task Greedy (SG) --}  This version of the greedy heuristic
first sorts the jobs in descending order by their memory requirements before
proceeding as in the standard greedy algorithm. The idea is to place relatively
large jobs while the system is still lightly loaded.

\noindent{\bf Greedy with Backtracking (GB) --}  It is possible to modify the GR
heuristic to add backtracking. Clearly full-fledged backtracking would lead to 
100\% success rate for all instances that have a feasible solution, but it would
also require potentially exponential time. One thus needs methods to prune the
search tree. We use a simple method, placing an arbitrary bound (500,000) on the
number of job placement attempts. An alternate pruning technique would be to
restrict placement attempts to the top 25\% candidate placements, but based on
our experiments it is vastly inferior to using an arbitrary bound on job
placement attempts.

\noindent{\bf Sorted Greedy with Backtracking (SGB) --}  This version is a
combination of SG and GB, i.e., tasks are sorted in descending order of memory
requirement as in SG and backtracking is used as in GB.  

\subsection{Multi-Capacity Bin Packing Algorithms}

Resource allocation problems are often akin to bin packing problems, and
\probzero is no exception. There are however two important differences between
our problem and bin packing. First, our tasks resource requirements are dual,
with both memory and CPU requirements. Second, our CPU requirements are not
fixed but depend on the achieved yield. The first difference can be addressed by
using ``multi-capacity'' bin packing heuristics. Two Multi-capacity bin packing
heuristics were proposed in~\cite{leinberger1999mcb} for the general case of
$d$-capacity bins and items, but in the $d=2$ case these two algorithms turn out
to be equivalent. The second difference can be addressed via a binary search on
the yield value. 

Consider an instance of \probzero and a fixed value of the yield, $Y$, that
needs to be achieved. By fixing $Y$, each task has both a fixed memory
requirement and a fixed CPU requirement, both taking values between 0 and 1,
making it possible to apply the algorithm in~\cite{leinberger1999mcb} directly.

Accordingly, one splits the tasks into two lists, with one list containing the
tasks with higher CPU requirements than memory requirements and the other
containing the tasks with higher memory requirements than CPU requirements. One
then sorts each list. In~\cite{leinberger1999mcb} the lists are sorted according
to the sum of the CPU and memory requirements.

Once the lists are sorted, one can start assigning tasks to the first host.
Lists are always scanned in order, searching for a task that can ``fit'' on the
host, which for the sake of this discussion we term a ``possible task''.
Initially one searches for a possible task in one and then the other list,
starting arbitrarily with any list. This task is then assigned to the host.
Subsequently, one always searches for a possible task from the list that goes
against the current imbalance. For instance, say that the host's available
memory capacity is 50\% and its available CPU capacity is 80\%, based on tasks
that have been assigned to it so far. In this case one would scan the list of
tasks that have higher CPU requirements than memory requirements to find a
possible task. If no such possible task is found, then one scans the other list
to find a possible task. When no possible tasks are found in either list, one
starts this process again for the second host, and so on for all hosts. If all
tasks can be assigned in this manner on the available hosts, then resource
allocation is successful.  Otherwise resource allocation fails.

The final yield must be between $0$, representing failure, and the smaller of
$1$ or the total computation capacity of all the hosts divided by the total 
computational requirements of all the tasks. We arbitrarily choose to start at
one-half of this value and perform a binary search of possible minimum yield
values, seeking to maximize minimum yield. Note that under some circumstances
the algorithm may fail to find a valid allocation at a given potential yield
value, even though it would find one given a larger yield value. This type of
failure condition is to be expected when applying heuristics.

While the algorithm in~\cite{leinberger1999mcb} sorts each list by the sum of
the memory and CPU requirements, there are other likely sorting key candidates.
For completeness we experiment with 8 different options for sorting the lists,
each resulting in a MCB (Multi-Capacity Bin Packing) algorithm. We describe all
8 options below:

\begin{itemize}
\item MCB1: memory + CPU, in ascending order;
\item MCB2: max(memory,CPU) - min(memory,CPU), in ascending order;
\item MCB3: max(memory,CPU) / min(memory,CPU), in ascending order;
\item MCB4: max(memory,CPU), in ascending order;
\item MCB5: memory + CPU, in descending order;
\item MCB6: max(memory,CPU) - min(memory,CPU), in descending order.
\item MCB7: max(memory,CPU) / min(memory,CPU), in descending order;
\item MCB8: max(memory,CPU), in descending order;
\end{itemize}

\subsection{Increasing Average Yield}
\label{sec.aveyield}

While the objective function to be maximized for solving \probzero is the
minimum yield, once an allocation that achieves this goal has been found there
may be excess computational resources available which would be wasted if not
allocated. Let us call $\cal{Y}$ the maximized minimum yield value computed by
one of the aforementioned algorithms (either an exact value obtained by solving
the MILP, or a likely sub-optimal value obtained with a heuristic). One can then
solve a new linear program simply by adding the constraint $Y \geq \cal{Y}$ and
seeking to maximize $\sum_{j} \alpha_{ij} / \alpha_{i}$, i.e., the average
yield. Unfortunately this new program also contains both integer and rational
variables, therefore requiring exponential time for computing an exact solution.
Therefore, we choose to impose the additional condition that the $e_{ij}$ values
be unchanged in this second round of optimization. In other terms, only CPU
fractions can be modified to improve average yield, not job locations. This
amounts to replacing the $e_{ij}$ variables by their values as constants when
maximizing the average yield and the new linear program has then only rational
variables.

It turns out that, rather than solving this linear program with a linear program
solver, we can use the following optimal greedy algorithm. First, for each job
$i$ assigned to host $j$, we set the fraction of the compute capability of host
$j$ given to job $i$ to the value exactly achieving the maximum minimum yield:
$\alpha_{ij}=\alpha_i.\cal{Y}$. Then, for each host, we scale up the compute
fraction of the job with smallest compute requirement $\alpha_i$ until either
the host has no compute capability left or the job's compute requirement is
fully fulfilled. In the latter case, we then apply the same scheme to the job
with the second smallest compute requirement on that host, and so on. The
optimality of this process is easily proved via a typical exchange argument.

All our heuristics use this average yield maximization technique after
maximizing the minimum yield.

\section{Simulation Experiments}
\label{sec.results}

We evaluate our heuristics based on four metrics: (i)~the achieved minimum
yield; (ii)~the achieved average yield; (iii)~the failure rate; and (iv)~the run
time. We also compare the heuristics with the exact solution of the MILP for
small instances, and to the (unreachable upper bound) solution of the rational 
LP for all instances. The achieved minimum and average yields considered are
average values over successfully solved problem instances.  The run times given
include only the time required for the given heuristic since all algorithms use
the same average yield maximization technique.

\subsection{Experimental Methodology}
\label{sec.methodology}

We conducted simulations on synthetic problem instances. We defined these
instances based on  the number of hosts, the number of jobs, the total amount of
free memory, or \emph{memory slack}, in the system, the average job CPU
requirement, and the coefficient of variance of both the memory and CPU
requirements of jobs.  The memory slack is used rather than the average job
memory requirement since it gives a better sense of how tightly packed the
system is as a whole. In general (but not always) the greater the slack the
greater the number of feasible solutions to \probzero.

Per-task CPU and memory requirements are sampled from a normal distribution with
given mean and coefficient of variance, truncated so that values are between 0
and 1. The mean memory requirement is defined as $H * (1 - slack) / J$, where
$slack$ has value between 0 and 1. The mean CPU requirement is taken to be 0.5,
which in practice means that feasible instances with fewer than twice as many
tasks as hosts have a maximum minimum yield of 1.0 with high probability. We do
not ensure that every problem instance has a feasible solution.

Two different sets of problem instances are examined.  The first set of
instances, ``small'' problems, includes instances with small numbers of hosts
and tasks. Exact optimal solutions to these problems can be found with a MILP
solver in a tractable amount of time (from a few minutes to a few hours on a
3.2Ghz machine using the GLPK solver). The second set of instances, ``large''
problems, includes instances for which the numbers of hosts and tasks are too
large to compute exact solutions.  For the small problem set we consider 4 hosts
with 6, 8, 10, or 12 tasks. Slack ranges from 0.1 to 0.9 with increments of 0.1,
while coefficients of variance for memory and CPU requirements are given values
of 0.25 and 0.75, for a total of 144 different problem specifications. 10
instances are generated for each problem specification, for a total of 1,440
instances. For the large problem set we consider 64 hosts with sets of 100, 250
and 500 tasks. Slack and coefficients of variance for memory and CPU
requirements are the same as for the small problem set for a total of 108
different problems specifications. 100 instances of each problem specification
were generated for a total of 10,800 instances.

\subsection{Experimental Results}

\subsubsection{Multi-Capacity Bin Packing}

\begin{figure}
  \centering
  \includegraphics[width=0.45\textwidth]{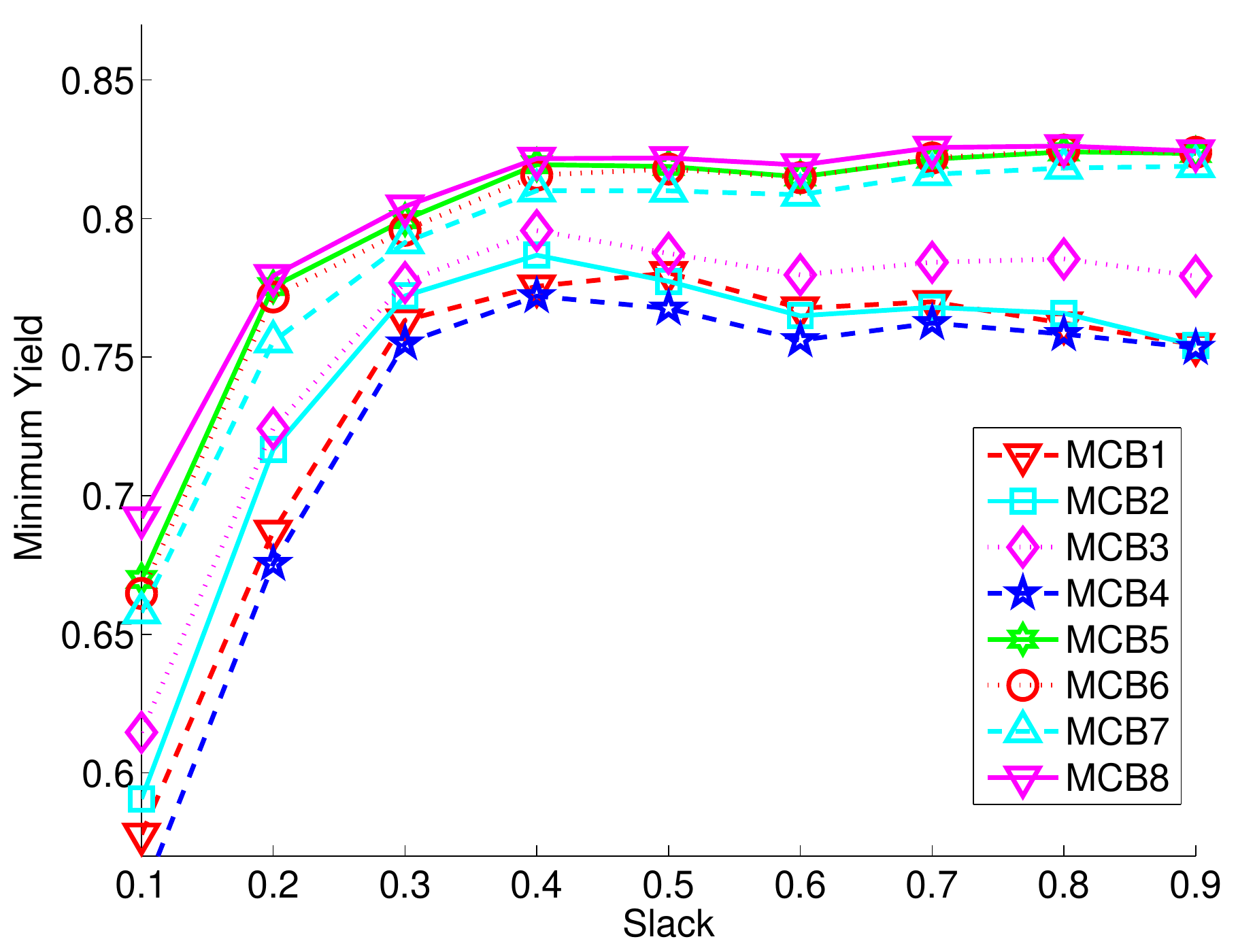}
  \caption{MCB Algorithms -- Minimum Yield vs. Slack for small problem
     instances.}
  \label{fig.mcb-minyield-vs-slack-for-4-hosts}
\end{figure}

\begin{figure}
  \centering
  \includegraphics[width=0.45\textwidth]{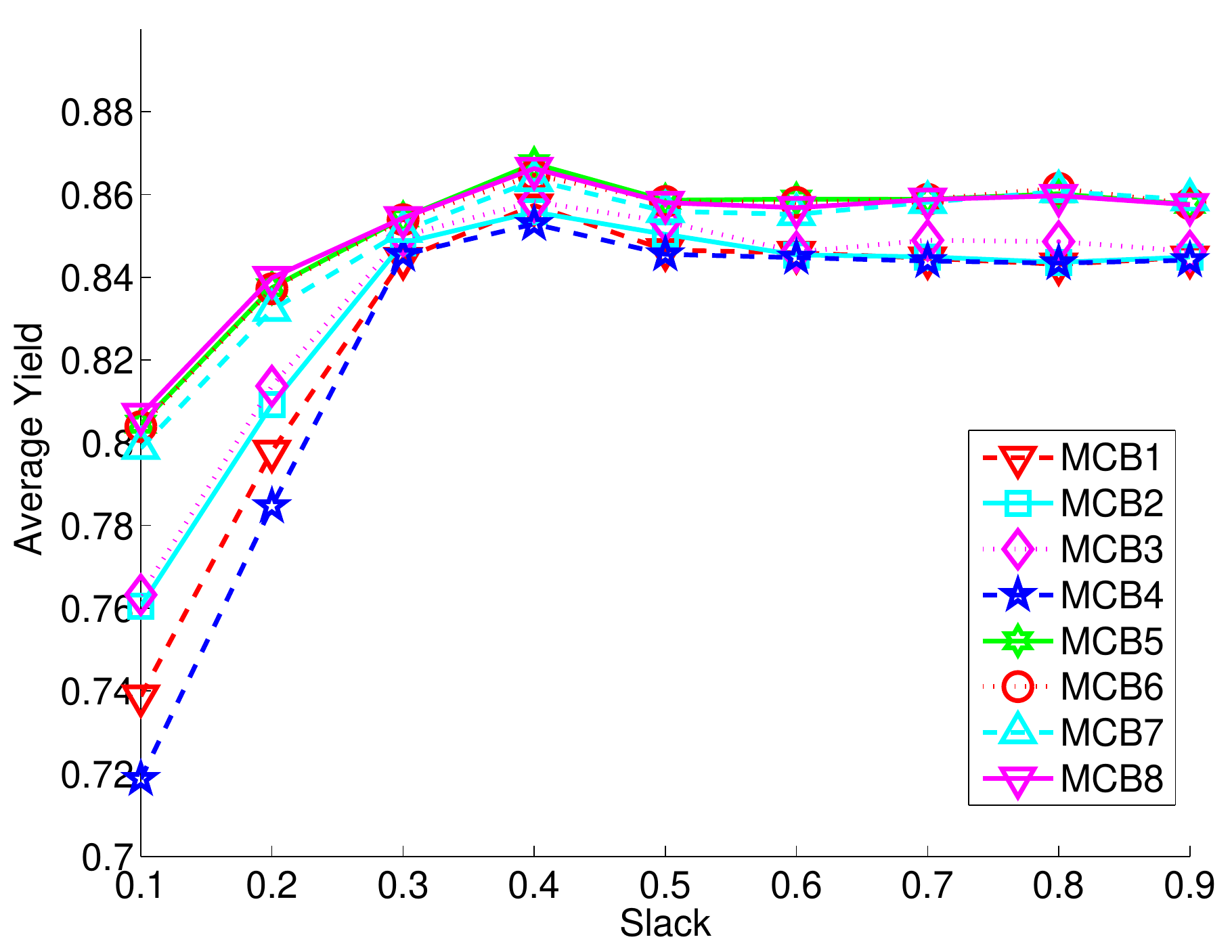}
  \caption{MCB Algorithms -- Average Yield vs. Slack for small problem
     instances.}
  \label{fig.mcb-avgyield-vs-slack-for-4-hosts}
\end{figure}

\begin{table}
\begin{center}
\begin{tabular}{|l|c|c|}
\hline
& \multicolumn{2}{|c|}{\% deg. from best} \\
\cline{2-3}
algorithm & avg. & max \\
\hline
MCB8 &  1.06 & 40.45 \\
MCB5 &  1.60 & 38.67 \\
MCB6 &  1.83 & 37.61 \\
MCB7 &  3.91 & 40.71 \\
MCB3 & 11.76 & 55.73 \\
MCB2 & 14.21 & 48.30 \\
MCB1 & 14.90 & 55.84 \\
MCB4 & 17.32 & 46.95 \\
\hline
\end{tabular}
\end{center}
\caption{Average and Maximum percent degradation from best of the MCB algorithms
for small problem instances.}
\label{tab.mcb-degredation-for-4-hosts}
\end{table}

We first present results only for our 8 multi-capacity bin packing algorithms to
determine the best one. Figure~\ref{fig.mcb-minyield-vs-slack-for-4-hosts} shows
the achieved maximum minimum yield versus the memory slack averaged over small
problem instances. As expected, as the memory slack increases all algorithms
tend to do better although some algorithms seem to experience slight decreases
in performance beyond a slack of $0.4$. Also expected, we see that the four
algorithms that sort the tasks by descending order outperform the four that sort
them by ascending order. Indeed, it is known that for bin packing starting with
large items typically leads to better results on average.

The main message here is that MCB8 outperforms all other algorithms across the
board. This is better seen in Table~\ref{tab.mcb-degredation-for-4-hosts}, which
shows the average and maximum percent degradation from best for all algorithms.
For a problem instance, the percent degradation from best of an algorithm is
defined as the difference, in percentage, between the minimum yield achieved by
an algorithm and the minimum yield achieved by the best algorithm for this
instance. The average and maximum percent degradations from best are computed
over all instances. We see that MCB8 has the lowest average percent degradation
from best. MCB5, which corresponds to the algorithm in~\cite{leinberger1999mcb}
performs well but not as well as MCB8. In terms of maximum percent degradation
from best, we see that MCB8 ranks third, overtaken by MCB5 and MCB6. Examining
the results in more details shows that, for these small problem instances, the
maximum degradation from best are due to outliers. For instance, for the MCB8
algorithm, out of the 1,379 solved instances, there are only 155 instances for
which the degradation from best if larger than 3\%, and only 19 for which it is
larger than 10\%. 

Figure~\ref{fig.mcb-avgyield-vs-slack-for-4-hosts} shows the average yield
versus the slack (recall that the average yield is optimized in a second phase,
as described in Section~\ref{sec.aveyield}). We see here again that the MCB8
algorithm is among the very best algorithms. 

\begin{figure}
  \centering
  \includegraphics[width=0.45\textwidth]{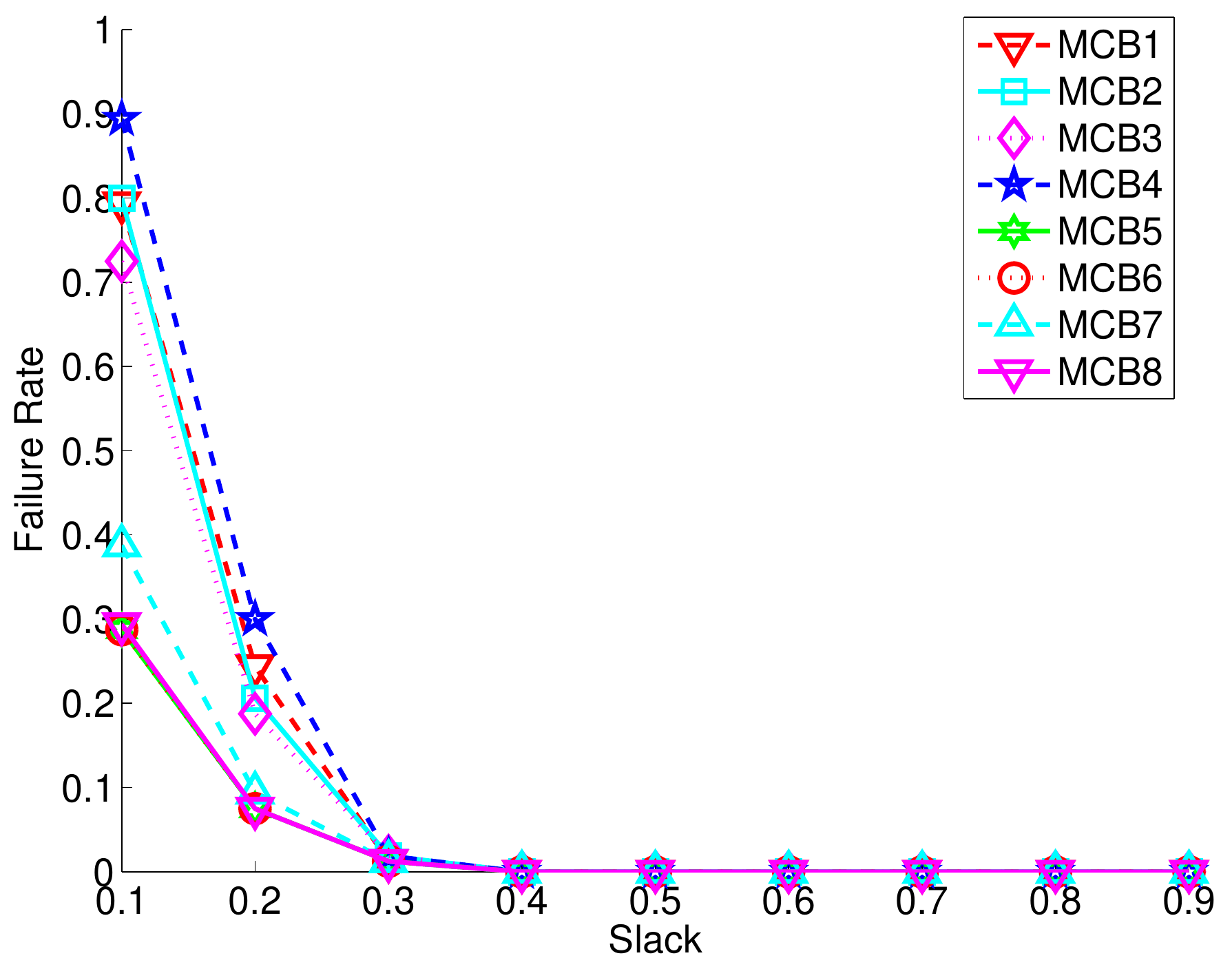}
  \caption{MCB Algorithms -- Failure Rate vs. Slack for small problem
     instances.}
  \label{fig.mcb-failrate-vs-slack-for-4-hosts}
\end{figure}
\begin{figure}
  \centering
  \includegraphics[width=0.45\textwidth]{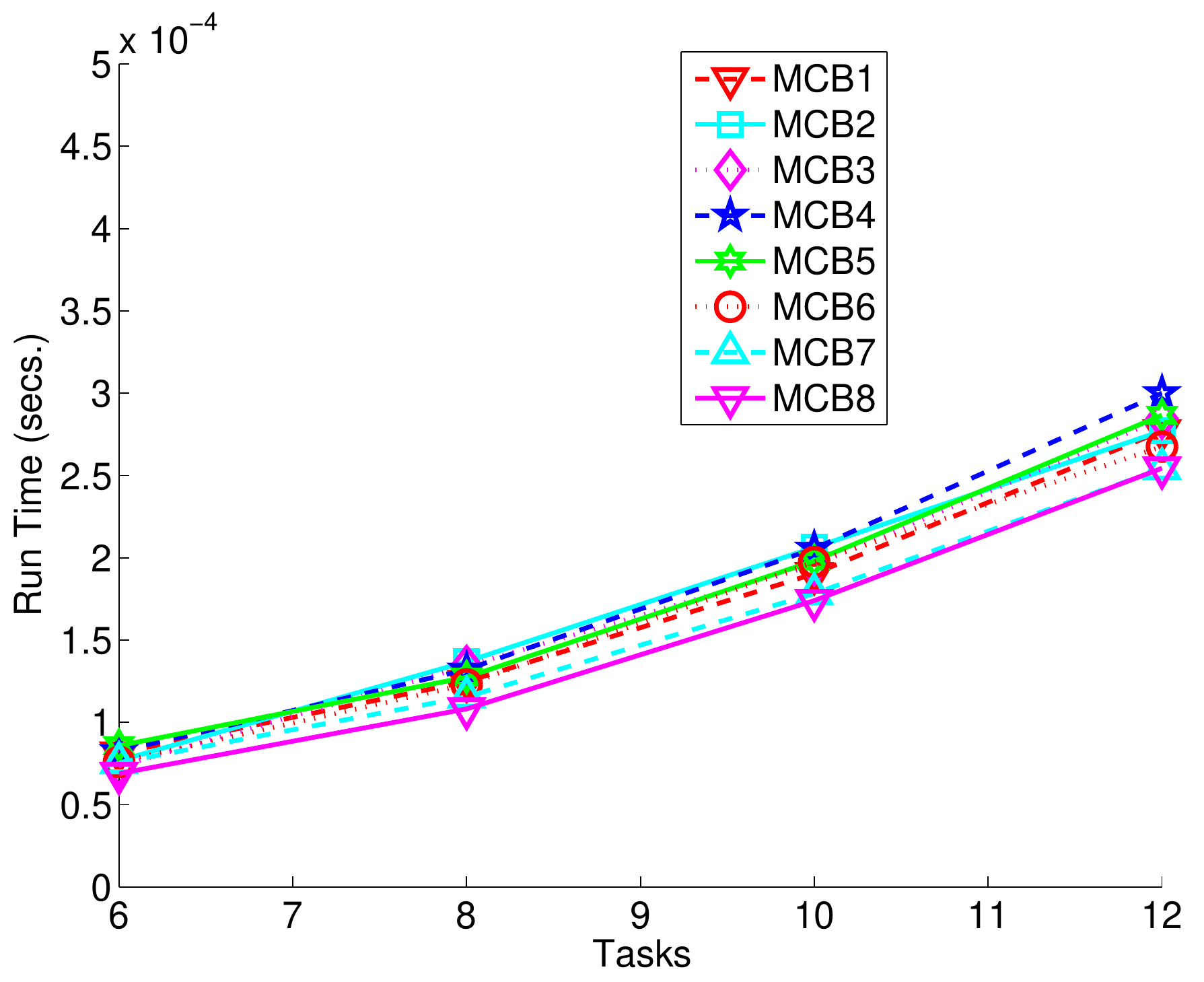}
  \caption{MCB Algorithms -- Run time vs. Number of Tasks for small problems
     instances.}
  \label{fig.mcb-runtime-vs-tasks-for-4-hosts}
\end{figure}

Figure~\ref{fig.mcb-failrate-vs-slack-for-4-hosts} shows the failure rates of
the 8 algorithms versus the memory slack. As expected failure rates decrease as
the memory slack increases, and as before we see that the four algorithms that
sort tasks by descending order outperform the algorithms that sort tasks by
ascending order. Finally, Figure~\ref{fig.mcb-runtime-vs-tasks-for-4-hosts} 
shows the runtime of the algorithms versus the number of tasks. We use a 3.2GHz
Intel Xeon processor. All algorithms have average run times under 0.18
milliseconds, with MCB8 the fastest by a tiny margin.

\begin{figure}
  \centering
  \includegraphics[width=0.45\textwidth]{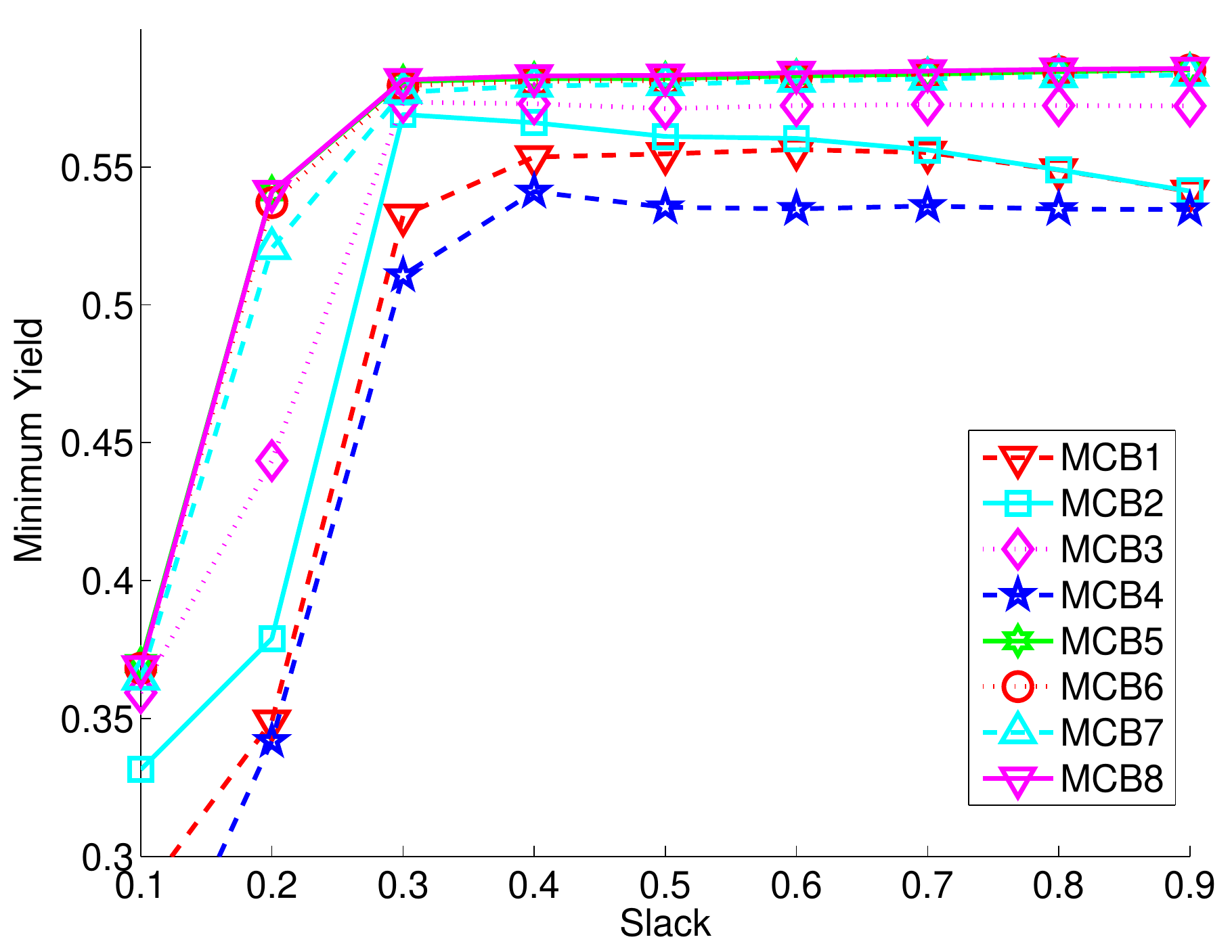}
  \caption{MCB Algorithms -- Minimum Yield vs. Slack for large problem
     instances.}
  \label{fig.mcb-minyield-vs-slack-for-64-hosts}
\end{figure}

\begin{figure}
\begin{center}
\begin{tabular}{|l|c|c|}
\hline
& \multicolumn{2}{|c|}{\% deg. from best} \\
\cline{2-3}
algorithm & avg. & max \\
\hline
MCB8 &  0.09 & 3.16\\
MCB5 &  0.25 & 3.50\\
MCB6 &  0.46 & 16.68\\
MCB7 &  1.04 & 48.39\\
MCB3 &  4.07 & 64.71\\
MCB2 &  8.68 & 46.68\\
MCB1 & 10.97 & 73.33\\
MCB4 & 14.80 & 61.20\\
\hline
\end{tabular}
\end{center}
\caption{Average and Maximum percent degradation from best of the MCB algorithms
for large problem instances.}
\label{tab.mcb-degredation-for-64-hosts}
\end{figure}

\begin{figure}
  \centering
  \includegraphics[width=0.45\textwidth]{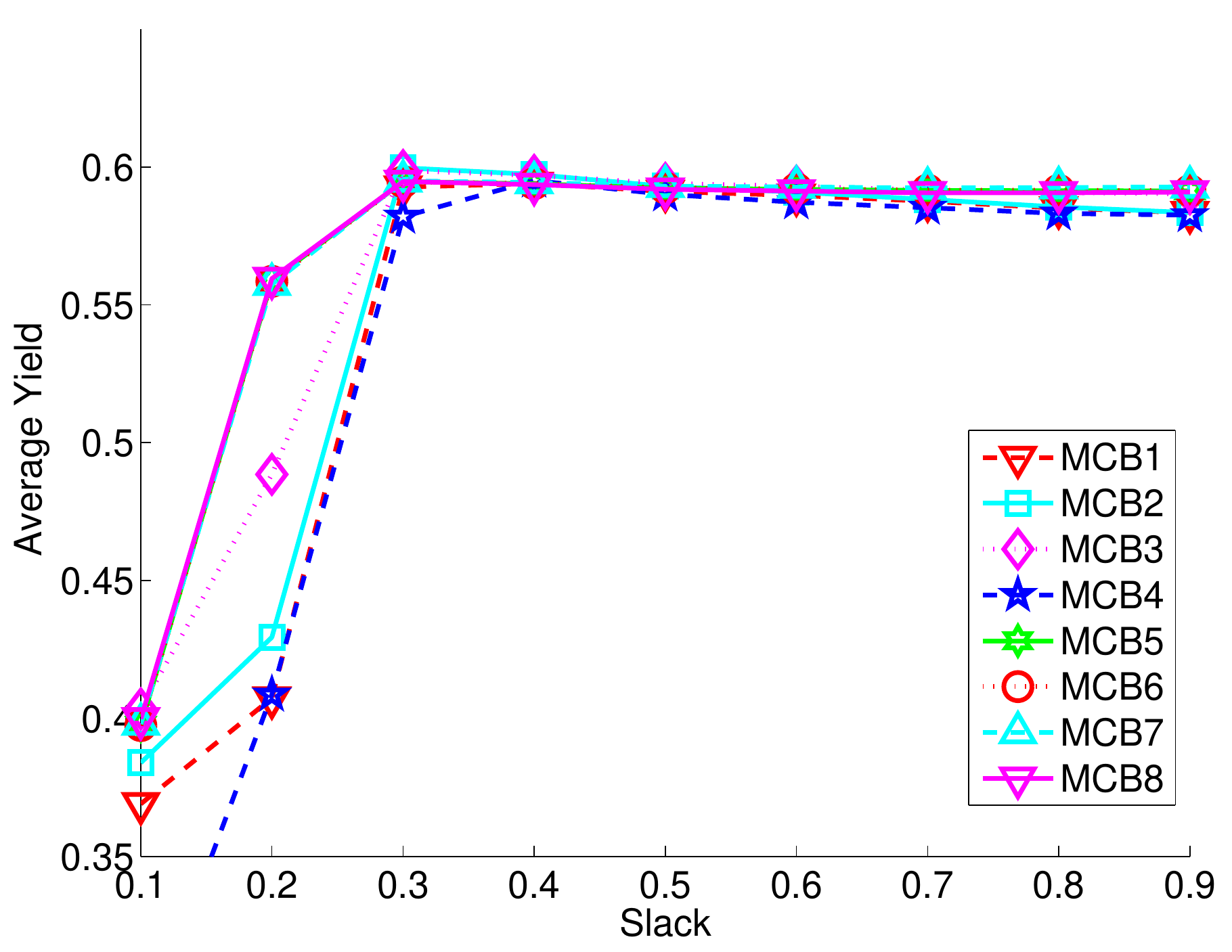}
  \caption{MCB Algorithms -- Average Yield vs. Slack for large problem
     instances.}
  \label{fig.mcb-avgyield-vs-slack-for-64-hosts}
\end{figure}

\begin{figure}
  \centering
  \includegraphics[width=0.45\textwidth]{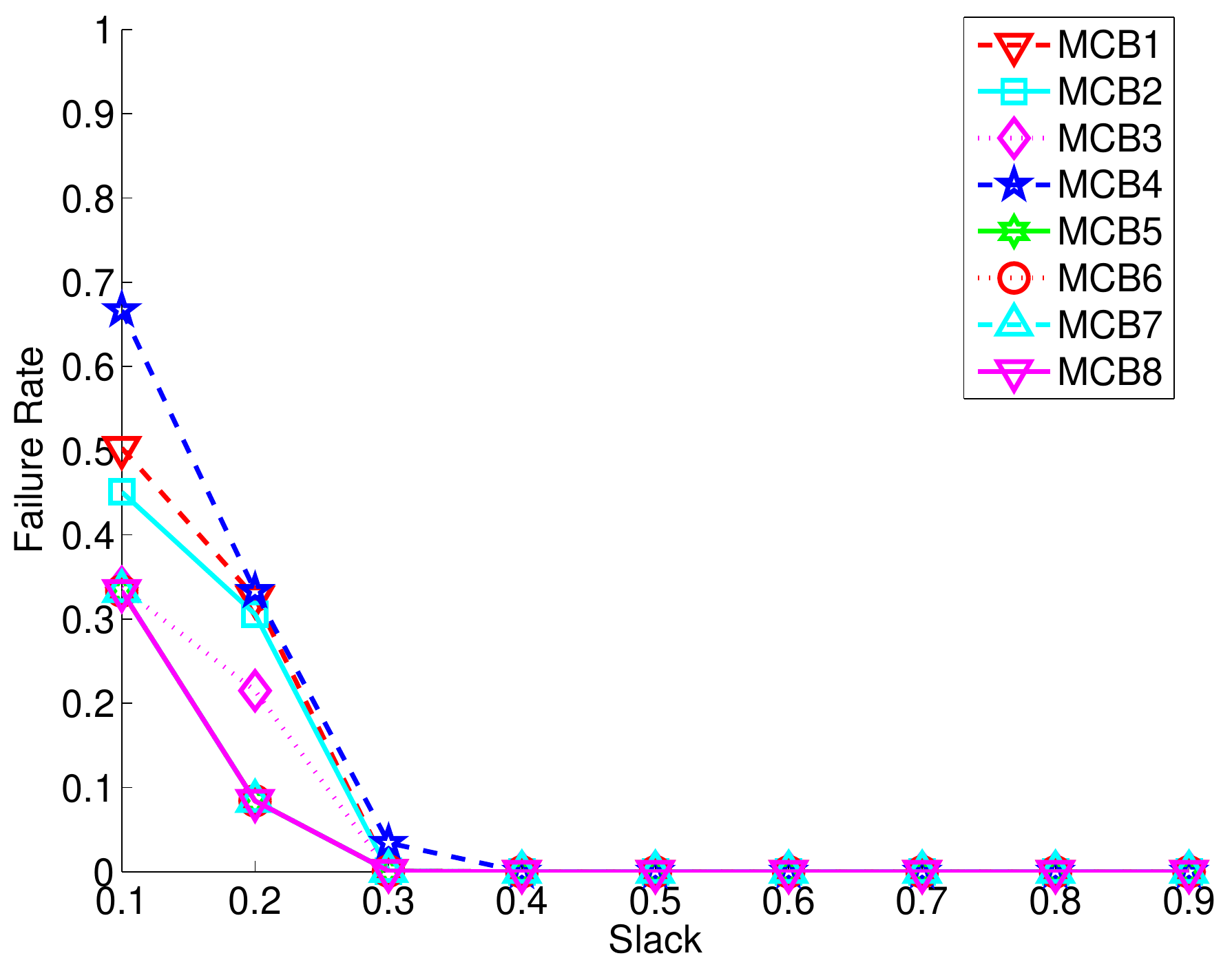}
  \caption{MCB Algorithms -- Failure Rate vs. Slack for large problem
     instances.}
  \label{fig.mcb-failrate-vs-slack-for-64-hosts}
\end{figure}

\begin{figure}
  \centering
  \includegraphics[width=0.45\textwidth]{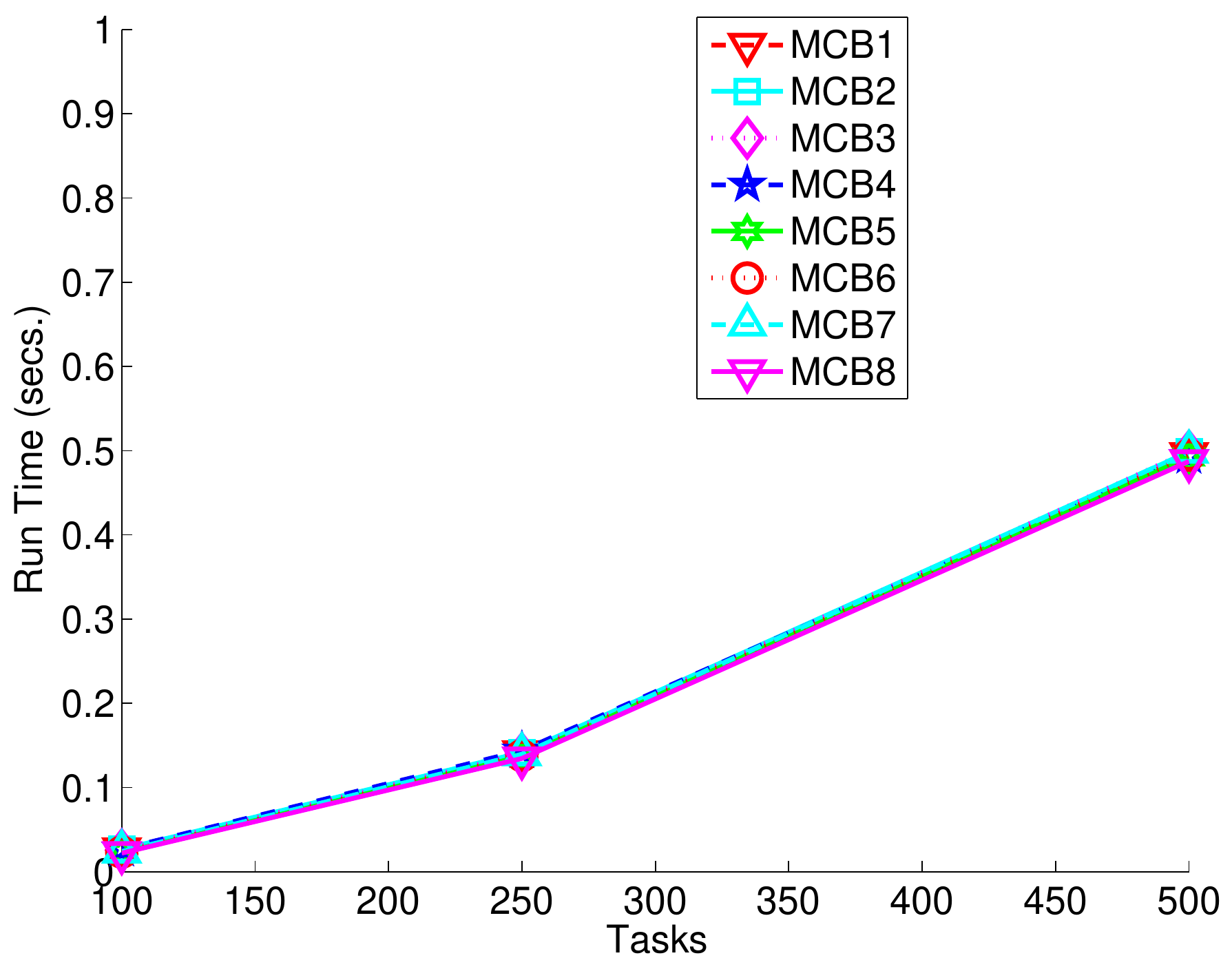}
  \caption{MCB Algorithms -- Run time vs. Number of Tasks for large problem
     instances.}
  \label{fig.mcb-runtime-vs-tasks-for-64-hosts}
\end{figure}

Figures~\ref{fig.mcb-minyield-vs-slack-for-64-hosts},
~\ref{fig.mcb-avgyield-vs-slack-for-64-hosts},
~\ref{fig.mcb-failrate-vs-slack-for-64-hosts},
and~\ref{fig.mcb-runtime-vs-tasks-for-64-hosts} are similar to
Figures~\ref{fig.mcb-minyield-vs-slack-for-4-hosts},
~\ref{fig.mcb-avgyield-vs-slack-for-4-hosts},
~\ref{fig.mcb-failrate-vs-slack-for-4-hosts},
and~\ref{fig.mcb-runtime-vs-tasks-for-4-hosts}, but show results for large
problem instances. The message is the same here: MCB8 is the best algorithm, or
closer on average to the best than the other algorithms. This is clearly seen in
Table~\ref{tab.mcb-degredation-for-64-hosts}, which is similar to
Table~\ref{tab.mcb-degredation-for-4-hosts}, and shows the average and maximum
percent degradation from best for all algorithms for large problem instances.
According to both metrics MCB8 is the best algorithm, with MCB5 performing well
but not as well as MCB8.

In terms of run times, Figure~\ref{fig.mcb-runtime-vs-tasks-for-64-hosts} shows
run times under one-half second for 500 tasks for all of the MCB algorithms.
MCB8 is again the fastest by a tiny margin.

Based on our results we conclude that MCB8 is the best option among the 8
multi-capacity bin packing options.  In all that follows, to avoid graph
clutter, we exclude the 7 other algorithms from our overall results.

\subsubsection{Small Problems}

\begin{figure}
  \centering
  \includegraphics[width=0.45\textwidth]{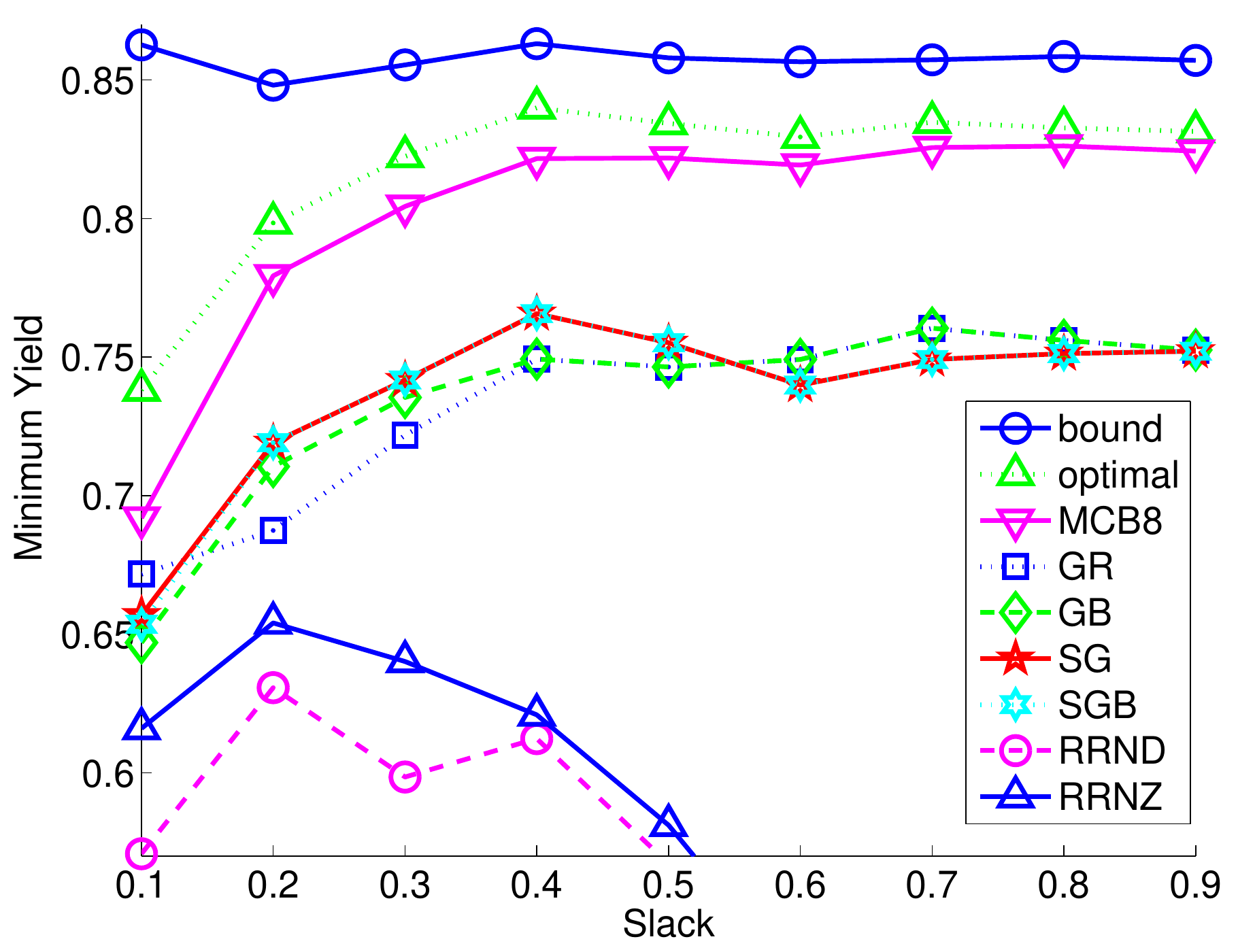}
  \caption{Minimum Yield vs. Slack for small problem instances.}
  \label{fig.overall-minyield-vs-slack-for-4-hosts}
\end{figure}

\begin{figure}
  \centering
  \includegraphics[width=0.45\textwidth]{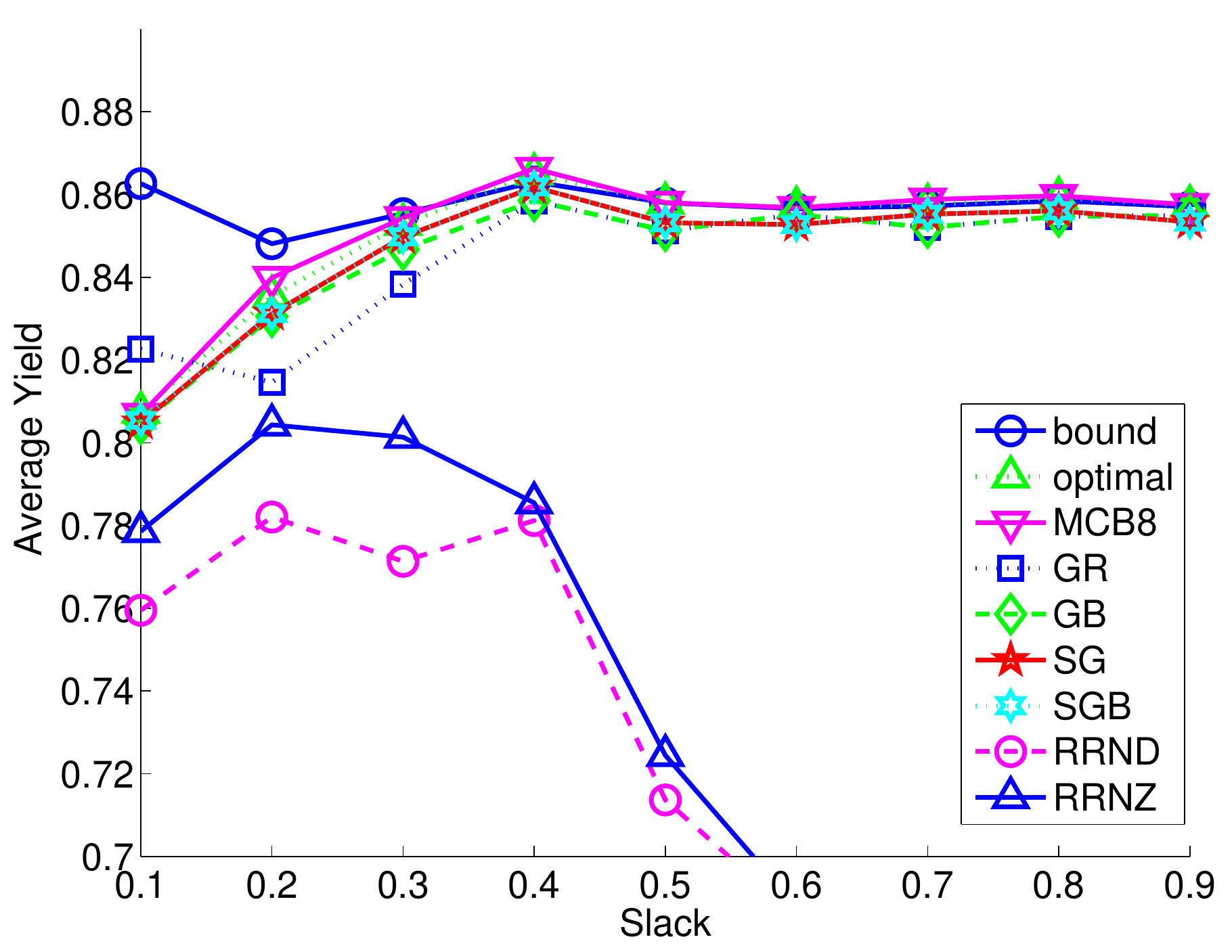}
  \caption{Average Yield vs. Slack for small problem instances.}
  \label{fig.overall-avgyield-vs-slack-for-4-hosts}
\end{figure}

\begin{figure}
  \centering
  \includegraphics[width=0.45\textwidth]{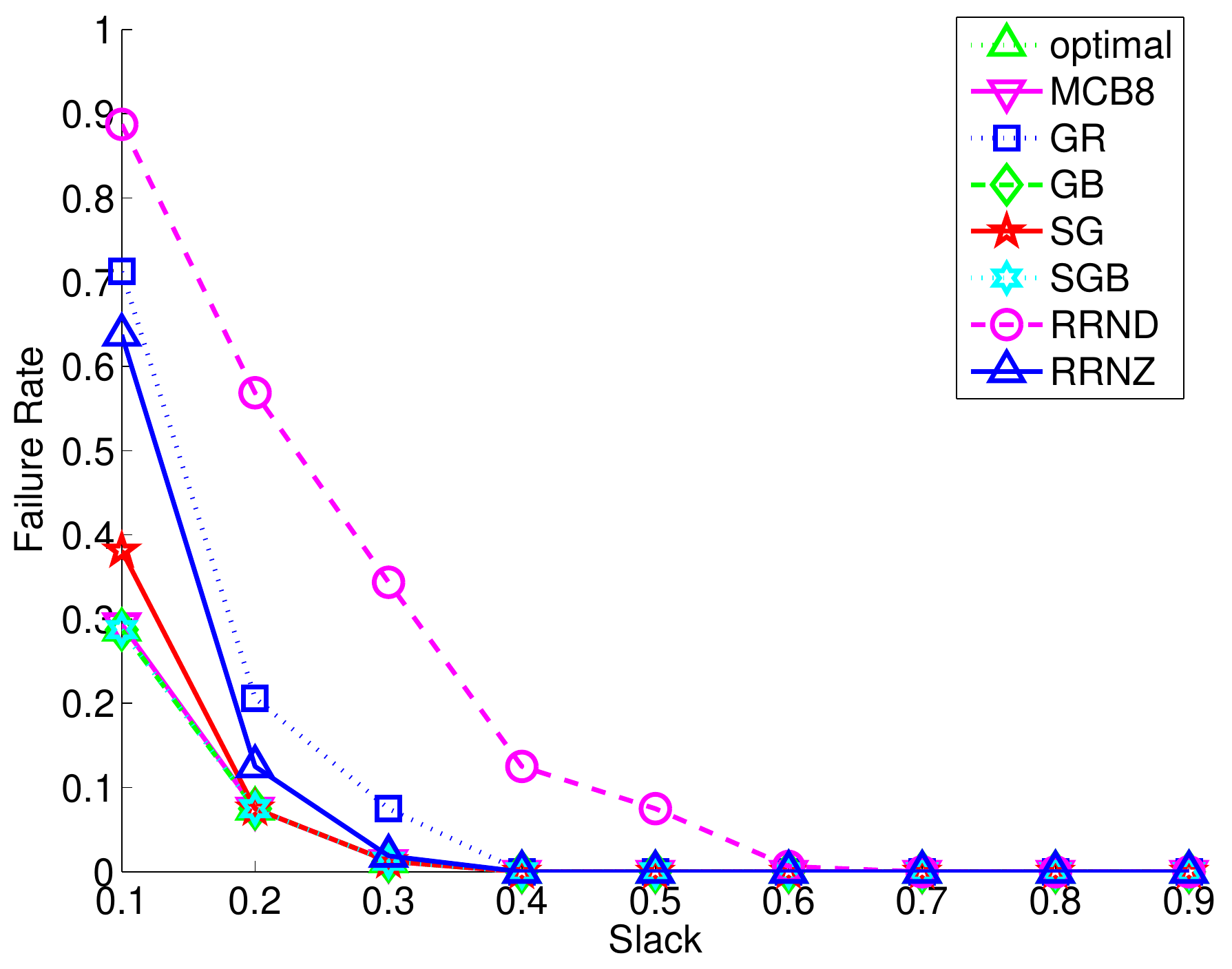}
  \caption{Failure Rate vs. Slack for small problem instances.}
  \label{fig.overall-failrate-vs-slack-for-4-hosts}
\end{figure}

Figure~\ref{fig.overall-minyield-vs-slack-for-4-hosts} shows the achieved
maximum minimum yield versus the memory slack in the system for our algorithms,
the MILP solution, and for the solution of the rational LP, which is an upper
bound on the achievable solution. The solution of the LP is only about 4\%
higher on average than that of the MILP, although it is significantly higher for
very low slack values. The solution of the LP will be interesting for large
problem instances, for which we cannot compute an exact solution.  On average,
the exact MILP solution is about 2\% better than MCB8, and about 11\% to 13\%
better than the greedy algorithms.  All greedy algorithms exhibit roughly the
same performance. The RRND and RRNZ algorithms lead to results markedly
poorer than the other algorithms, with expectedly the RRNZ algorithm slightly 
outperforming the RRND algorithm. Interestingly, once the slack reaches 0.2 the
results of both the RRND and RRNZ algorithms begin to worsen.

Figure~\ref{fig.overall-avgyield-vs-slack-for-4-hosts} is similar to
Figure~\ref{fig.overall-minyield-vs-slack-for-4-hosts} but plots the average
yield. The solution to the rational LP, the MILP solution, the MCB8 solution,
and the solutions produced by the greedy algorithms are all within a few percent
of each other. As in Figure~\ref{fig.overall-minyield-vs-slack-for-4-hosts},
when the slack is lower than 0.2 the relaxed solution is significantly better.

Figure~\ref{fig.overall-failrate-vs-slack-for-4-hosts} plots the failure rates
of our algorithms. The RRND algorithm has the worst failure rate, followed by GR
and then RRNZ. There were a total of 60 instances out of the 1,440 generated 
which were judged to be infeasible because the GLPK solver could not find a
solution for them. We see that the MCB8, SG, and SGB algorithms have failure 
rates that are not significantly larger than that of the exact MILP solution.
Out of the 1,380 feasible instances, the GB and SGB never fail to find a
solution, the MCB8 algorithm fails once, and the SG algorithm fails 15 times.

\begin{figure}
  \centering
  \includegraphics[width=0.45\textwidth]{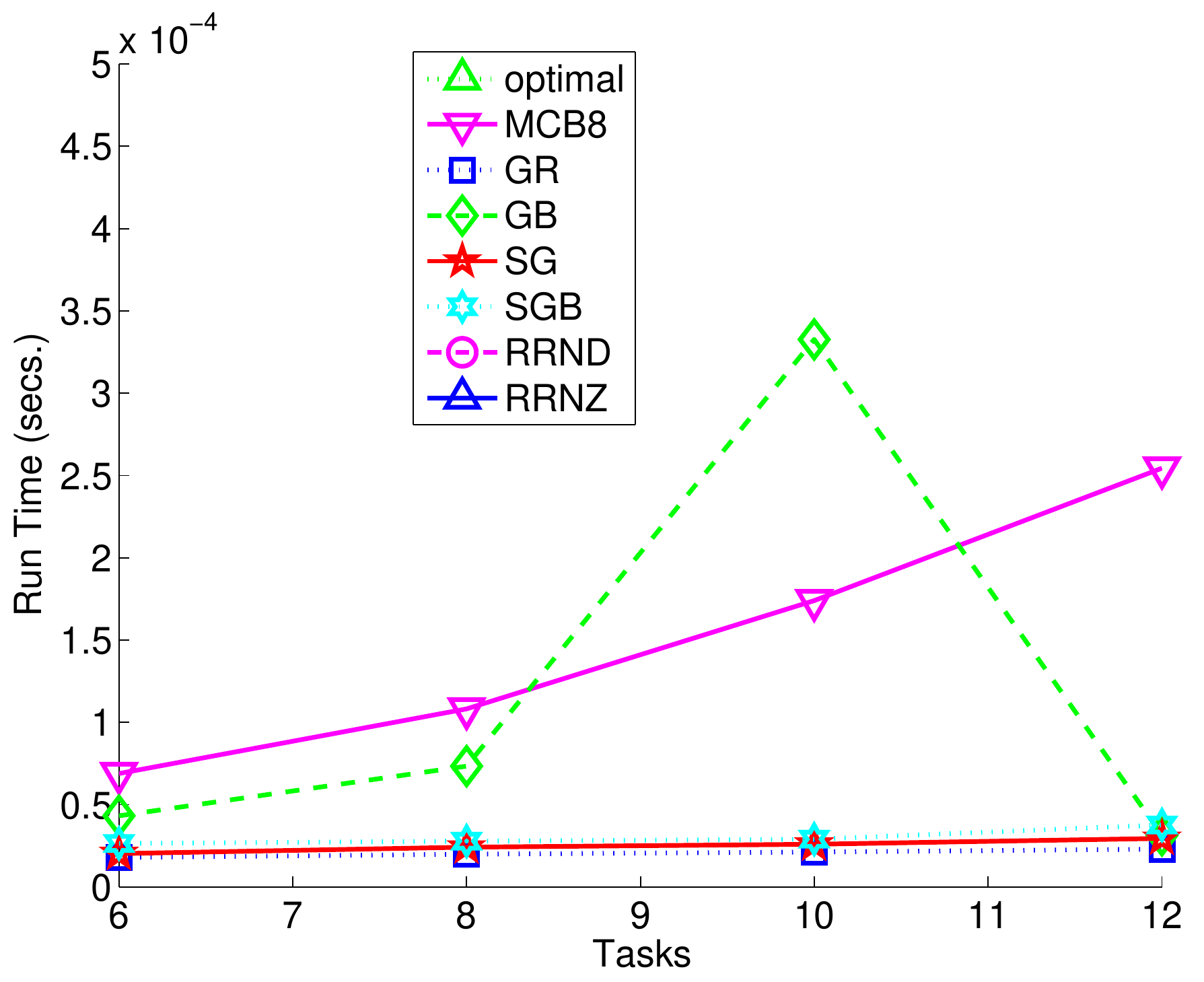}
  \caption{Run time vs. Number of Tasks for small problems instances.}
  \label{fig.overall-runtime-vs-tasks-for-4-hosts}
\end{figure}

Figure~\ref{fig.overall-runtime-vs-tasks-for-4-hosts} shows the run times of the
various algorithms on a 3.2GHz Intel Xeon processor. The computation time of the
exact MILP solution is so much greater than that of the other algorithms that it
cannot be seen on the graph.  Computing the exact solution to the MILP took an
average of 28.7 seconds, however there were 9 problem instances with solutions
that took over 500 seconds to compute, and a single problem instance that
required 11,549.29 seconds (a little over 3 hours) to solve. For the small
problem instances the average run times of all greedy algorithms and of the MCB8
algorithm are under 0.15 milliseconds, with the simple GR and SG algorithms
being the fastest. The RRND and RRNZ algorithms are significantly slower, with
run times a little over 2 milliseconds on average; they also cannot be seen on
the graph.

\subsubsection{Large Problems}
~\\

Figures~\ref{fig.overall-minyield-vs-slack-for-64-hosts},
\ref{fig.overall-avgyield-vs-slack-for-64-hosts},
\ref{fig.overall-failrate-vs-slack-for-64-hosts}, and
\ref{fig.overall-runtime-vs-tasks-for-64-hosts} are similar to
Figures~\ref{fig.overall-minyield-vs-slack-for-4-hosts},
\ref{fig.overall-avgyield-vs-slack-for-4-hosts},
\ref{fig.overall-failrate-vs-slack-for-4-hosts}, and
\ref{fig.overall-runtime-vs-tasks-for-4-hosts} respectively, but for
large problem instances.  In
Figure~\ref{fig.overall-minyield-vs-slack-for-64-hosts} we can see that MCB8
algorithm achieves far better results than any other heuristic. Furthermore,
MCB8 is extremely close to the upper bound as soon as the slack is 0.3 or larger
and is only 8\% away from this upper bound when the slack is 0.2. When the slack
is 0.1, MCB8 is 37\% away from the upper bound but we have seen with the small
problem instances that in this case the upper bound is significantly larger than
the actual optimal (see Figure~\ref{fig.overall-minyield-vs-slack-for-4-hosts}).

The performance of the greedy algorithms has worsened relative to the rational
LP solution, on average 20\% lower for slack values larger than 0.2. The GR and
GB algorithms perform nearly identically, showing that backtracking does not
help on the large problem instances. The RRNZ algorithm is again a poor
performer, with a profile that, unexpectedly, drops as slack increases. The
RRND algorithm not only achieved the lowest values for minimum yield, but also
completely failed to solve any instances of the problem for slack values less
than 0.4.

Figure~\ref{fig.overall-avgyield-vs-slack-for-64-hosts} shows the achieved
average yield values. The MCB8 algorithm again tracks the optimal for slack
values larger than 0.3. A surprising observation at first glance is that the
greedy algorithms manage to achieve higher average yields than the optimal or
MCB algorithms. This is due to their lower achieved minimum yields. Indeed, with
a lower minimum yield, average yield maximization is less constrained, making it
possible to achieve higher average yield than when starting from and allocation
optimal for the minimum yield. The greedy algorithms thus trade off fairness for
higher average performance. The RRNZ algorithm starts out doing well for average
slack, even better than GR or GB when the slack is low, but does much worse as
slack increases.

Figure~\ref{fig.overall-failrate-vs-slack-for-64-hosts} shows that for large
problem instances the GB and SGB algorithms have nearly as many failures as the
GR and SG algorithms when slack is low. This suggests that the arbitrary bound
of 500,000 placement attempts when backtracking, which was more than sufficient
for the small problem set, has little affect on overall performance for the
large problem set. It could thus be advisable to set the bound on the number of
placement attempts based on the size of the problem set and time allowed for
computation. The RRND algorithm is the only algorithm with a significant number
of failures for slack values larger than 0.3. The SG, SGB and MCB8 algorithms
exhibit the lowest failure rates, about 40\% lower than that experienced by the
other greedy and RRNZ algorithms, and more than 14 times lower than the failure
rate of the RRND algorithm. Keep in mind that, based on our experience with the
small problem set, some of the problem instances with small slacks may not be
feasible at all. 

Figure~\ref{fig.overall-runtime-vs-tasks-for-64-hosts} plots the average time
needed to compute the solution to \probzero on a 3.2GHz Intel Xeon for all the
algorithms versus the number of jobs. The RRND and RRNZ algorithms require
significant time, up to roughly 650 seconds on average for 500 tasks, and so 
cannot be seen at the given scale. This is attributed to solving the relaxed
MILP using GLPK. Note that this time could be reduced significantly by using a 
faster solver (e.g., CPLEX~\cite{cplex}). The GB and SGB algorithms require
significantly more time when the number of tasks is small. This is because the
failure rate decreases as the number of tasks increases. For a given set of
parameters, increasing the number of tasks decreases granularity. Since there is
a relatively large number of unsolvable problems when the number of tasks is
small, these algorithms spend a lot of time backtracking and searching though
the solution space fruitlessly, ultimately stopping only when the bounded number
of backtracking attempts is reached. The greedy algorithms are faster than the
MCB8 algorithm, returning solutions in 15 to 20 milliseconds on average for 500
tasks as compared to nearly half a second for MCB8. Nevertheless, less than .5
seconds for 500 tasks is clearly acceptable in practice.

\begin{figure}
  \centering
  \includegraphics[width=0.45\textwidth]{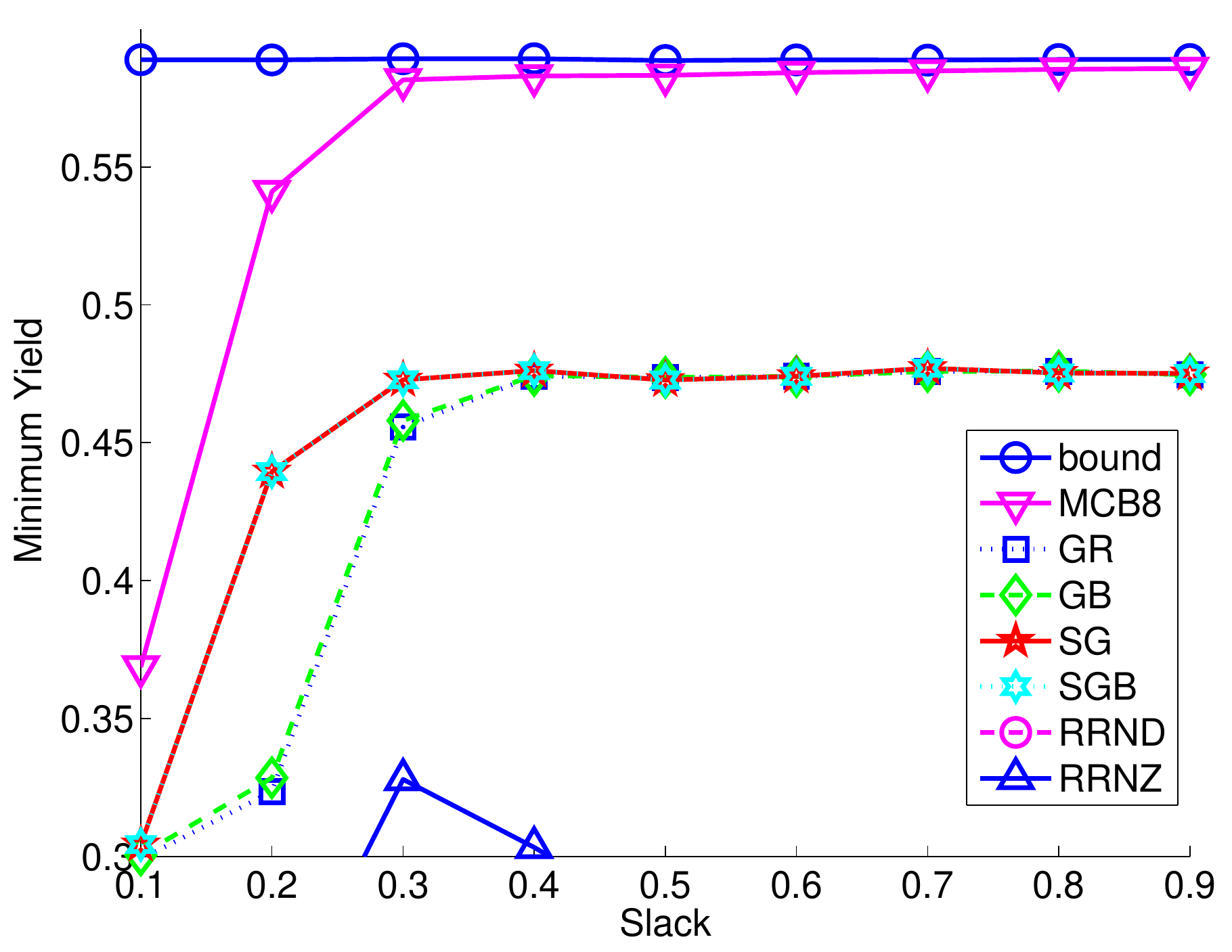}
  \caption{Minimum Yield vs. Slack for large problem instances.}
  \label{fig.overall-minyield-vs-slack-for-64-hosts}
\end{figure}

\begin{figure}
  \centering
  \includegraphics[width=0.45\textwidth]{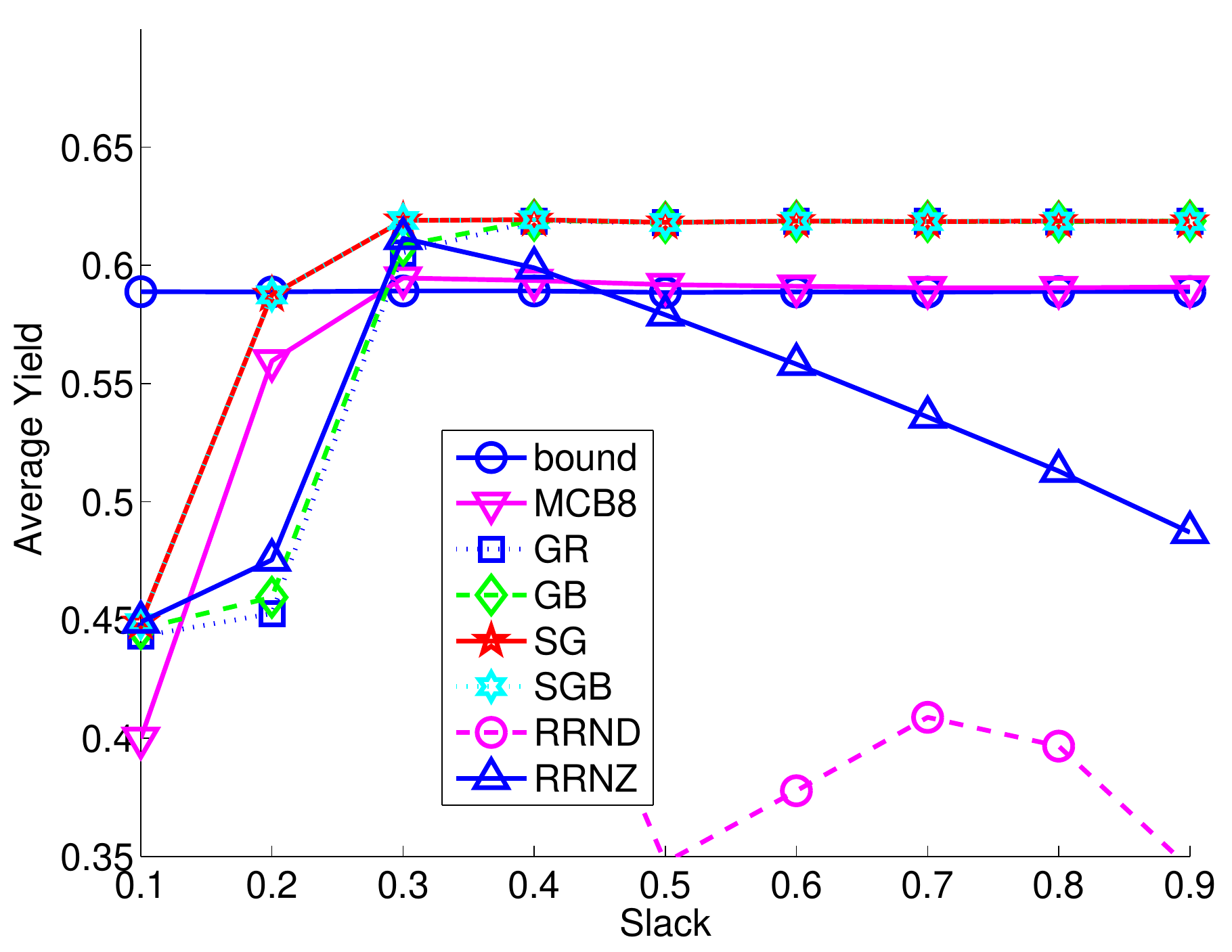}
  \caption{Average Yield vs. Slack for large problem instances.}
  \label{fig.overall-avgyield-vs-slack-for-64-hosts}
\end{figure}

\begin{figure}
  \centering
  \includegraphics[width=0.45\textwidth]{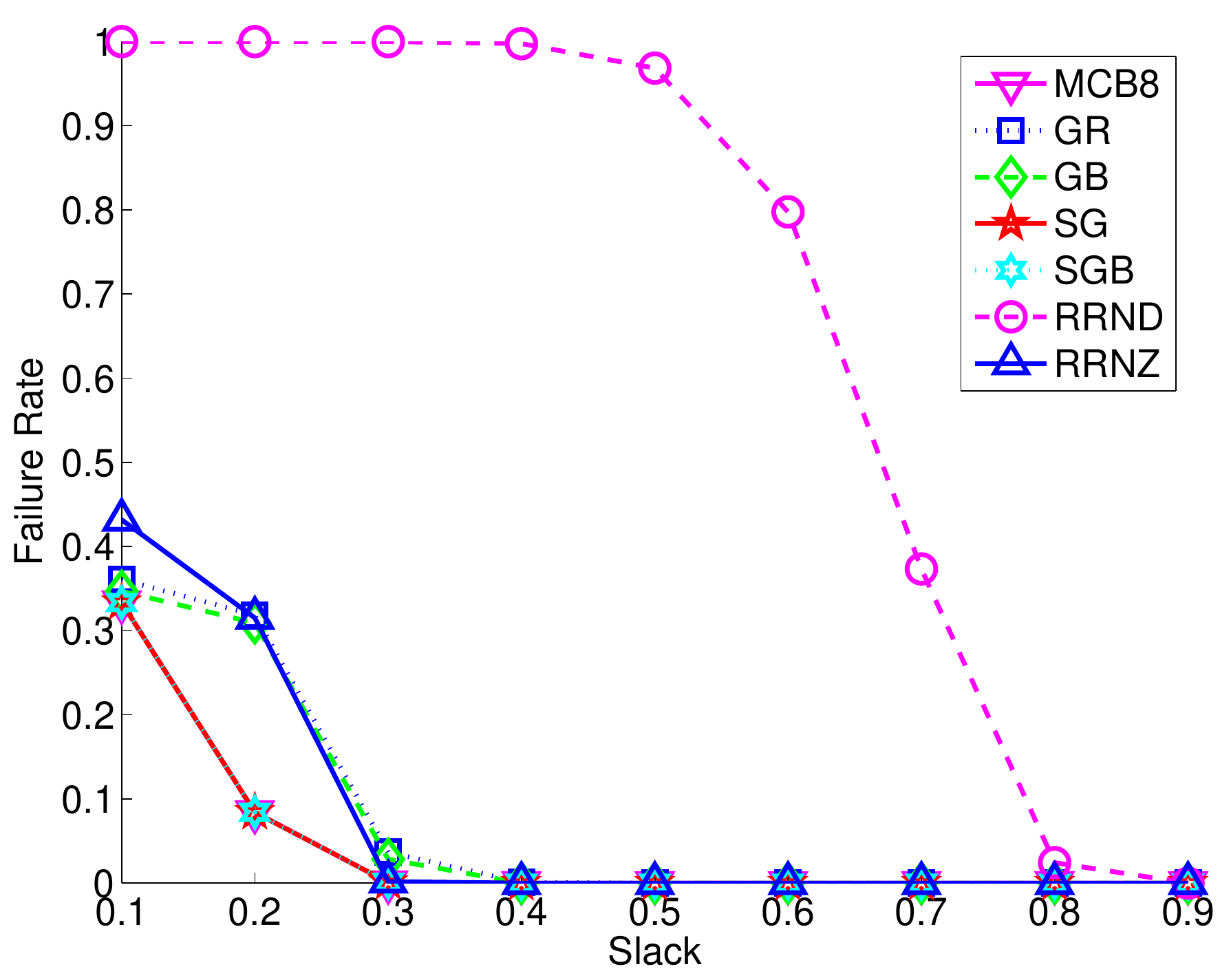}
  \caption{Failure Rate vs. Slack for large problem instances.}
  \label{fig.overall-failrate-vs-slack-for-64-hosts}
\end{figure}

\begin{figure}
  \centering
  \includegraphics[width=0.45\textwidth]{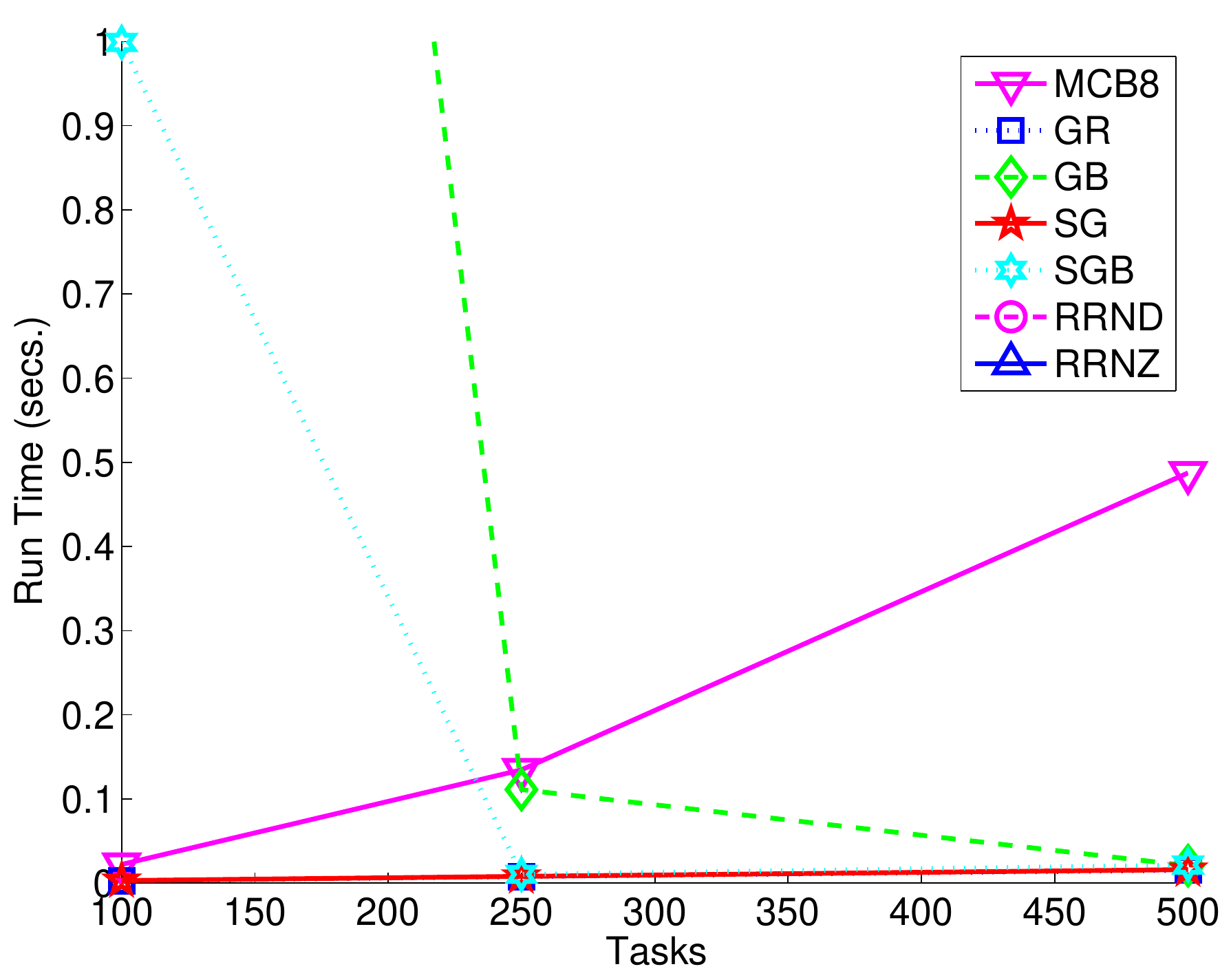}
  \caption{Run time vs. Number of Tasks for large problem instances.}
  \label{fig.overall-runtime-vs-tasks-for-64-hosts}
\end{figure}

\subsubsection{Discussion}

Our main result is that the multi-capacity bin packing algorithm that sorts
tasks in descending order by their largest resource requirement (MCB8) is the
algorithm of choice. It outperforms or equals all other algorithm nearly across
the board in terms of minimum yield, average yield, and failure rate, while
exhibiting relatively low run times. The sorted greedy algorithms (SG or SGB)
lead to reasonable results and could be used for very large numbers of tasks,
for which the run time of MCB8 may become too high. The use of backtracking in
the algorithms GB and SGB led to performance improvements for small problem sets
but not for large problem sets, suggesting that some sort of backtracking system
with a problem-size- or run-time-dependent bound on the number of branches to
explore could potentially be effective.

\section{Parallel Jobs}
\label{sec.parallel}

\subsection{Problem Formulation}

In this section we explain how our approach and algorithms can be easily
extended to handle parallel jobs that consist of multiple tasks (relaxing
assumption H3). We have thus far only concerned ourselves with independent jobs
that are both indivisible and small enough to run on a single machine. However,
in many cases users may want to split up jobs into multiple tasks, either
because they wish to use more CPU power in order to return results more quickly
or because they wish to process an amount of data that does not fit comfortably
within the memory of a single machine.

One na\"ive way to extend our approach to parallel jobs would be to simply
consider the tasks of a job independently. In this case individual tasks of the
same job could then receive different CPU allocations. However, in the vast
majority of parallel jobs it is not useful to have some tasks run faster than 
others as either the job makes progress at the rate of the slowest task or the
job is deemed complete only when all tasks have completed. Therefore, we opt to
add constraints to our linear program to enforce that the CPU allocations of
tasks within the same job must be identical. It would be straightforward to have
more sophisticated constraints if specific knowledge about a particular job is
available (e.g., task A should receive twice as much CPU as task B).

Another important issue here is the possibility of gaming the system when
optimizing the average yield. When optimizing the minimum yield, a division of a
job into multiple tasks that leads to a higher minimum yield benefits all jobs.
However, when considering the average yield optimization, which is done in our
approach as a second round of optimization, a problem arises because the average
yield metric favors small tasks, that is, tasks that have low CPU requirements.
Indeed, when given the choice to increase the CPU allocation of a small task or
of a larger task, for the same additional fraction of CPU, the absolute yield
increase would be larger for the small task, and thus would lead to a higher
average yield. Therefore, an unscrupulous user might opt for breaking his/her
job into unnecessarily many smaller tasks, perhaps hurting the parallel
efficiency of the job, but acquiring an overall larger portion of the total
available CPU resources, which could lead to shorter job execution time. To
remedy this problem we use a per-job yield metric (i.e., total CPU allocation
divided by total CPU requirements) during the average yield optimization phase.

The linear programming formulation with these additional considerations and
constraints is very similar to that derived in Section~\ref{sec.milp}.  We again
consider jobs $1..J$ and hosts $1..H$. But now each job $i$ consists of $T_i$
tasks. Since these jobs are constrained to be uniform, $\alpha_i$ represents the
maximum CPU consumption and $m_i$ represents the maximum memory consumption of
all tasks $k$ of job $i$. The integer variables $e_{ikj}$ are constrained to be
either $0$ or $1$ and represent the absence or presence of task $k$ of job $i$
on host $j$. The variables $\alpha_{ikj}$ represent the amount of CPU allocated
to task $k$ of job $i$ on host $j$.

\begin{eqnarray}
\label{eq.parallel}
\forall i,k,j  & \quad e_{ikj} \in \N,\\
\forall i,k,j  & \quad \alpha_{ikj} \in \Q,\\
\forall i,k,j  & 0 \leq e_{ikj} \leq 1,\\
\forall i,k,j  & 0 \leq \alpha_{ikj} \leq e_{ikj},\\
\forall i,k    & \sum_{j=1}^{H} e_{ikj} = 1,\\
\forall j      & \sum_{i=1}^{J} \sum_{k=1}^{T_i} \alpha_{ikj} \leq 1,\\
\forall j      & \sum_{i=1}^{J} \sum_{k=1}^{T_i} e_{ikj} m_{i} \leq 1,\\
\forall i,k    & \sum_{j=1}^{H} \alpha_{ikj} \leq \alpha_i,\\
\forall i,k,k' & \sum_{j=1}^{H} \alpha_{ikj} = \sum_{j=1}^{H} \alpha_{ik'j},\\
\forall i      & \sum_{j=1}^{H} \sum_{k=1}^{T_i} 
                \frac{\alpha_{ikj}}{T_i \times \alpha_{i}} \geq Y
\end{eqnarray}

Note that the final constraint is logically equivalent to the per-task yield
since all tasks are constrained to have the same CPU allocation. The reason for
writing it this way is to highlight that in the second phase of optimization one
should maximize the average per-job yield rather than the average per-task
yield.

\subsection{Results}

The algorithms described in Section~\ref{sec.algs} for the case of sequential
jobs can be used directly for minimum yield maximization for parallel jobs. The
only major difference is that the average per-task yield optimization phase
needs to be changed for an average per-job optimization phase.  As with the
per-task optimization, we make the simplifying assumption that task placement
decisions cannot be changed during this phase of the optimization. This
simplification removes not only the difficulty of solving a MILP, but also
allows us to avoid the enormous number of additional constraints which would be
required to make sure that all of a given job's tasks receive the same
allocation while keeping the problem linear.

We present results only for large problem instances as defined in
Section~\ref{sec.methodology}. We use the same experimental methodology as
defined there as well. We only need a way to decide how many tasks comprise a
parallel job. To this end, we use the parallel workload model proposed
in~\cite{workload_model_03}, which models many characteristics of parallel
workloads (derived based on statistical analysis of real-world batch system
workloads). The model for the number of tasks in a parallel job uses a two-stage
log-uniform distribution biased towards powers of two. We instantiate this model
using the same parameters as in~\cite{workload_model_03}, assuming that jobs can
consist of between 1 and 64 tasks.

\begin{figure}
  \centering
  \includegraphics[width=0.45\textwidth]{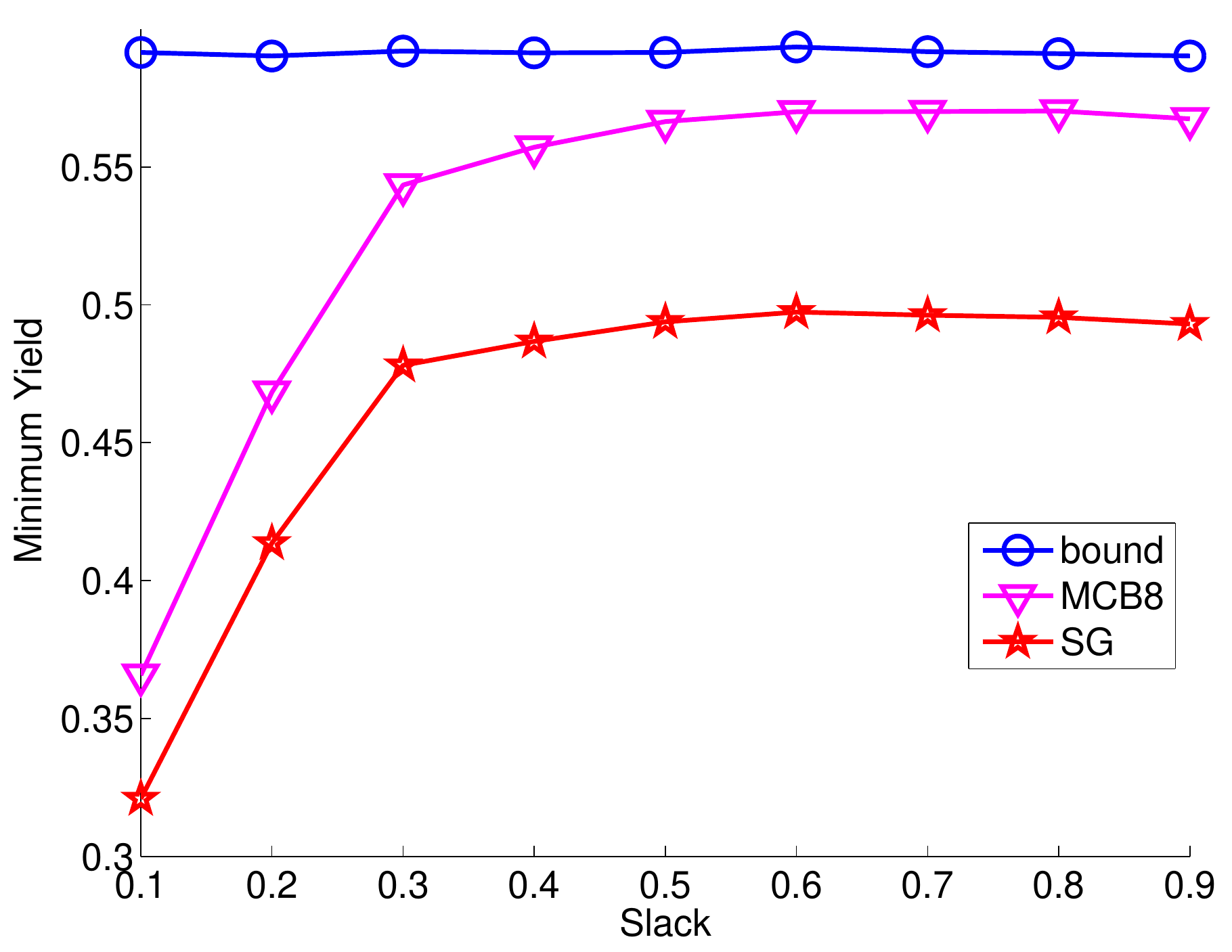}
  \caption{Minimum Yield vs. Slack for large problem instances for parallel
jobs.}
  \label{fig.parallel-minyield-vs-slack-for-64-hosts}
\end{figure}

Figure~\ref{fig.parallel-minyield-vs-slack-for-64-hosts} shows results for the
SG and the MCB8 algorithms. We exclude all other greedy algorithms as they were
all shown to be outperformed by SG, all other MCB algorithms because they were
all shown to be outperformed by MCB8, as well as the RRND and RRNZ algorithms
which were shown to perform poorly. The figure also shows the upper bound on
optimal obtained assuming that $e_{ij}$ variables can take rational values. We
see that MCB8 outperforms the SGB algorithm significantly and is close to the
upper bound on optimal for slacks larger than $0.3$. 

\begin{figure}
  \centering
  \includegraphics[width=0.45\textwidth]{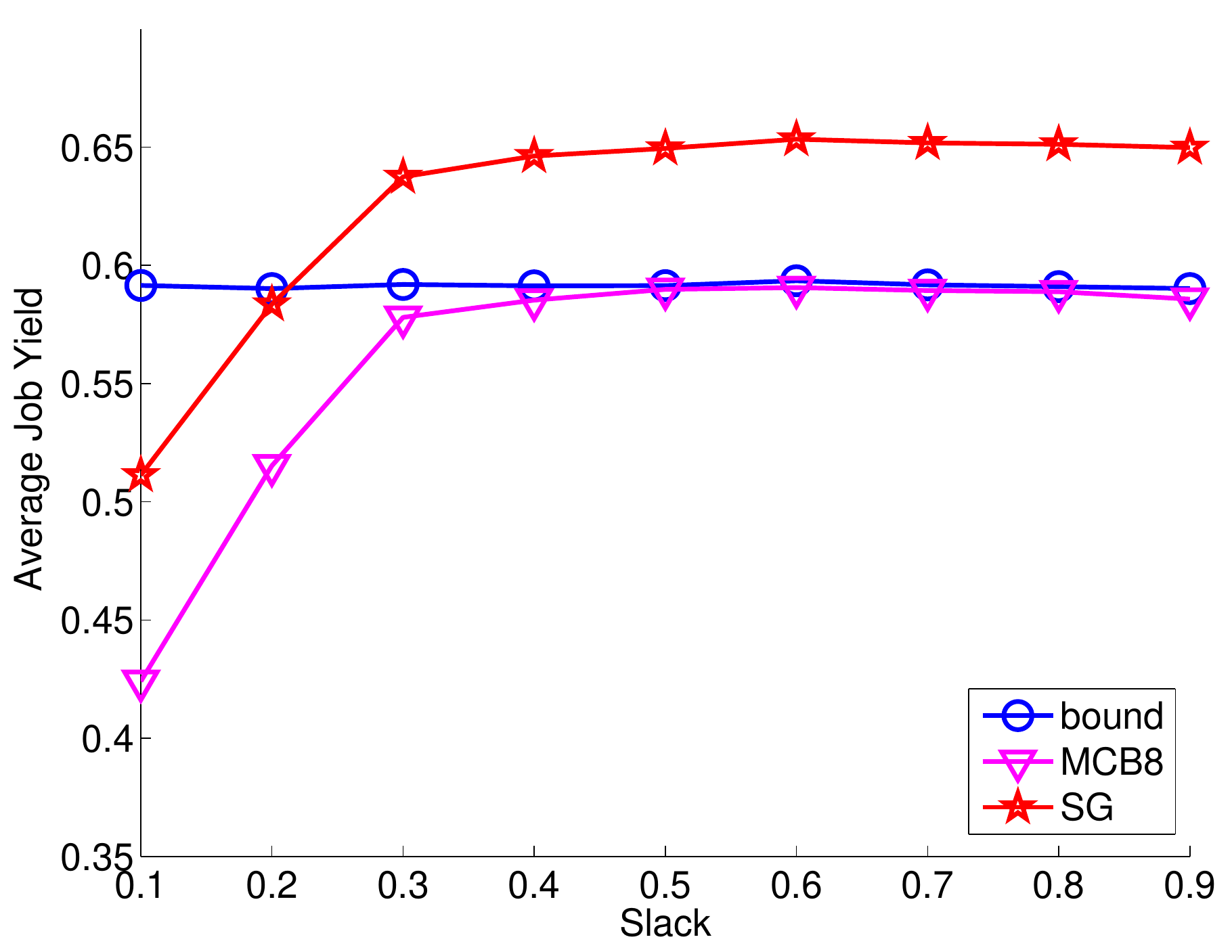}
  \caption{Average Yield vs. Slack for large problem instances for parallel
jobs.}
  \label{fig.parallel-avgjobyield-vs-slack-for-64-hosts}
\end{figure}

Figure~\ref{fig.parallel-avgjobyield-vs-slack-for-64-hosts} shows the average
job yield. We see the same phenomenon as in
Figure~\ref{fig.overall-avgyield-vs-slack-for-64-hosts}, namely that the greedy
algorithm can achieve higher average yield because it starts from a lower
minimum yield, and thus has more options to push the average yield higher 
(thereby improving average performance at the expense of fairness). 

\begin{figure}
  \centering
  \includegraphics[width=0.45\textwidth]{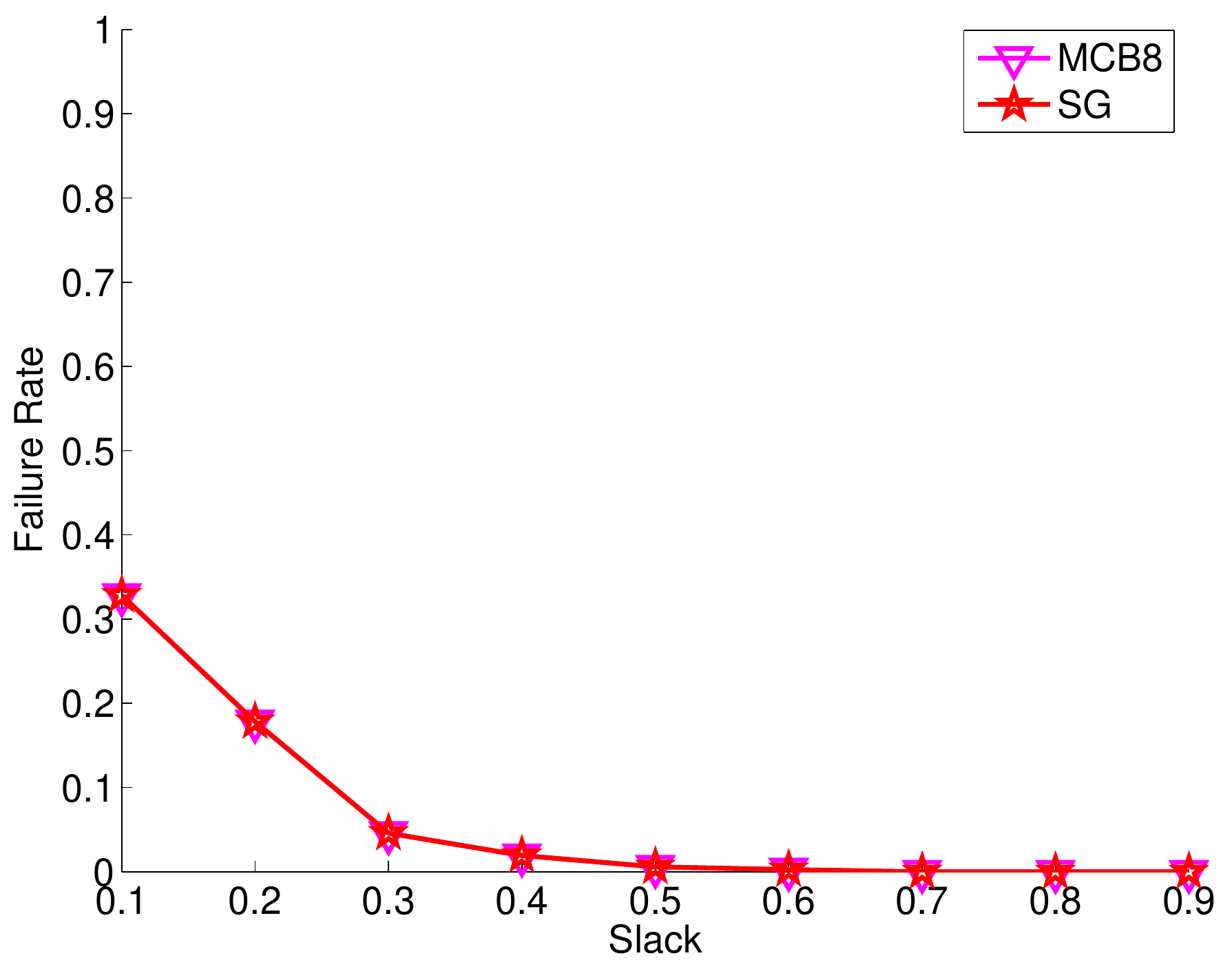}
  \caption{Failure Rate vs. Slack for large problem instances for parallel
jobs.}
  \label{fig.parallel-failrate-vs-slack-for-64-hosts}
\end{figure}

\begin{figure}
  \centering
  \includegraphics[width=0.45\textwidth]{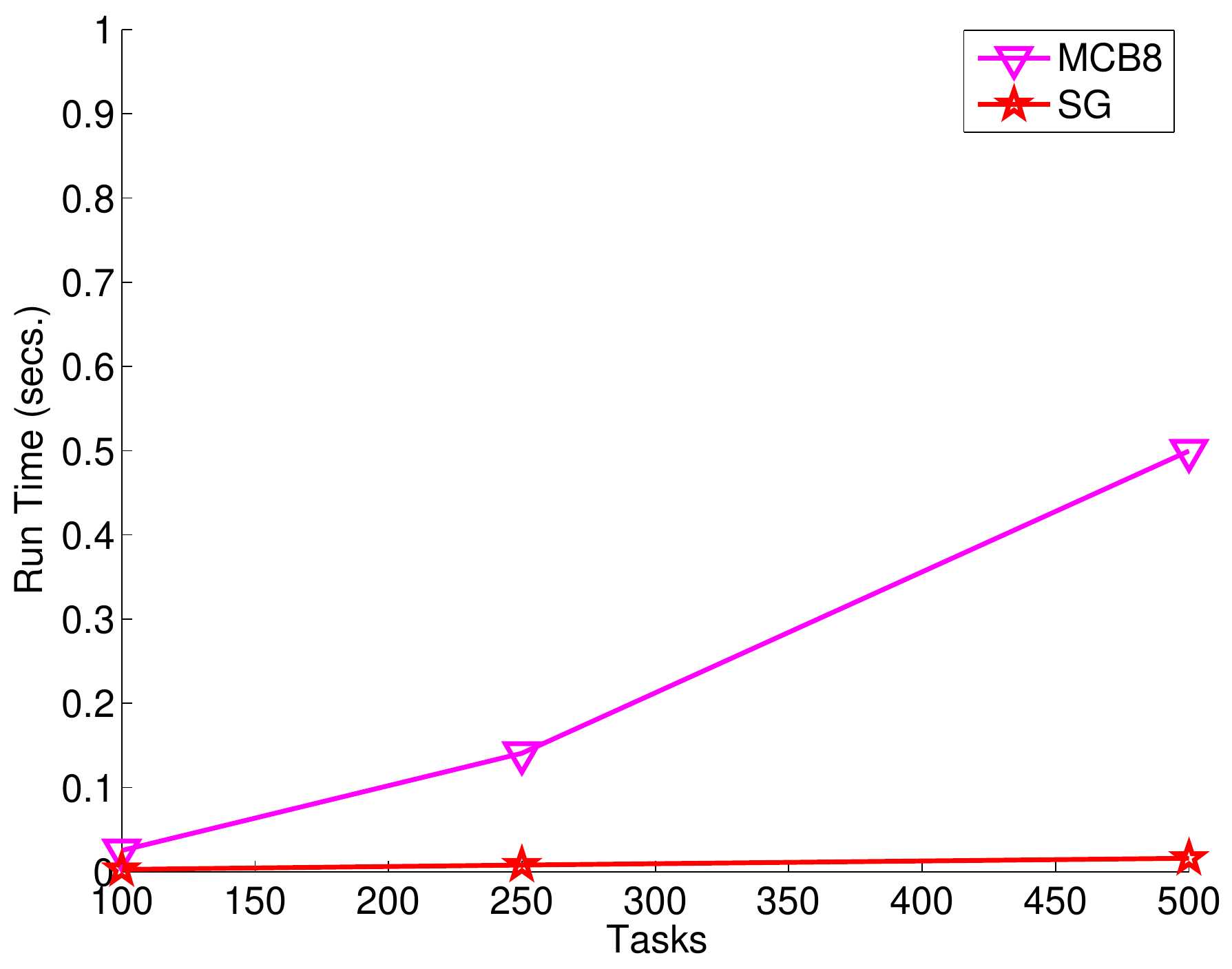}
  \caption{Runtime vs. Number of Tasks for large problem instances for parallel
jobs.}
  \label{fig.parallel-runtime-vs-tasks-for-64-hosts}
\end{figure}

Figure~\ref{fig.parallel-failrate-vs-slack-for-64-hosts} shows the failure rates
of the MCB8 and SG algorithms, which are identical. Finally
Figure~\ref{fig.parallel-runtime-vs-tasks-for-64-hosts} shows the run time of
both algorithms. We see that the SG algorithm is much faster than the MCB8
algorithm (by roughly a factor 32 for 500 tasks). Nevertheless, MCB8 can still
compute an allocation in under one half a second for 500 tasks.

Our conclusions are similar to the ones we made when examining results for
sequential jobs: in the case of parallel jobs the BCB8 algorithm is the
algorithm of choice for optimizing minimum yield, while the SGB algorithm could
be an alternate choice if the number of tasks is very large.  

\section{Dynamic Workloads}
\label{sec.dynamic}

In this section we study resource allocation in the case when assumption H4 no
longer holds, meaning that the workload is no longer static. We assume that job
resource requirements can change and that jobs can join and leave the system.
When the workload changes, one may wish to adapt the schedule to reach a new
(nearly) optimal allocation of resources to the jobs. This adaptation can entail
two types of actions: (i)~modifying the CPU fractions allocated to some jobs;
and (ii)~migrating jobs to different physical hosts. In what follows we extend
the linear program formulation derived in Section~\ref{sec.milp} to account for
resource allocation adaptation. We then discuss how current technology can be
used to implement adaptation with virtual clusters.

\subsection{Mixed-Integer Linear Program Formulation}

One difficult question for resource allocation adaptation, regardless of the
context, is whether the adaptation is ``worth it.'' Indeed, adaptation often
comes with an overhead, and this overhead may lead to a loss of performance. In
the case of virtual cluster scheduling, the overhead is due to VM migrations.
The question of whether adaptation is worthwhile is often based on a time
horizon (e.g., adaptation is not worthwhile if the workload is expected to
change significantly in the next 5 minutes)~\cite{adaptation_1,autopilot}. In
virtual cluster scheduling, as defined in this paper, jobs do not have time 
horizons. Therefore, in principle, the scheduler cannot reason about when
resource needs will change. It may be possible for the scheduler to keep track
of past workload behavior to forecast future workload behavior. Statistical
workload models have been built (see~\cite{workload_model_03,
li_supercomputing_2008} for models and literature reviews). Techniques to make
predictions based on historical information have been developed
(see~\cite{tsafrir_tpds2007} for task execution time models and a good
literature review). Making sound short-term decisions for resource allocation
adaptation requires highly accurate predictions, so as to carry out precise
cost-benefit analyses of various adaptation paths. Unfortunately, accurate point
predictions (rather than statistical characterizations) are elusive due to the
inherently statistical and transient nature of the workload, as seen in the
aforementioned works. Furthermore, most results in this area are obtained for
batch scheduling environments with parallel scientific applications, and it is
not clear whether the obtained models would be applicable in more general
settings (e.g., cloud computing environments hosting internet services).

Faced with the above challenge, rather than attempting arduous statistical
forecasting of adaption cost and pay-off, we side-step the issue and propose a
pragmatic approach. We consider schedule adaptation that attempts to achieve the
best possible yield, but so that job migrations do not entail moving more than
some fixed number of bytes, $B$ (e.g., to limit the amount of network load due
to schedule adaptation). If $B$ is set to $0$, then the adaptation will do the
best it can without using migration whatsoever. If $B$ is above the sum of the
job sizes (in bytes of memory requirement), then all jobs could be migrated.

It turns out that this adaptation scheme can be easily formulated as a
mixed-integer linear program. More generally, the value of $B$ can be chosen so
that it achieves a reasonable trade-off between overhead and workload
dynamicity. Choosing the best value for $B$ for a particular system could
however be difficult and may need to be adaptive as most workloads are
non-stationary. A good approach is likely to pick relatively smaller values of
$B$ for more dynamic workload. We leave a study of how to best tune parameter
$B$ for future work.

We use the same notations and definitions as in Section~\ref{sec.milp}. In
addition, we consider that some jobs are already assigned to a host:
$\bar{e}_{ij}$ is equal to $1$ if job $i$ is already running on host $j$, and
$0$ otherwise. For reasons that will be clear after we explain our constraints,
we simply set $\bar{e}_{ij}$ to $1$ for all $j$ if job $i$ corresponds to a
newly arrived job. Newly departed jobs need not be taken into account. We can
now write a new set of constraints as follows:

\begin{eqnarray}
\label{eq.constraints_adaptation}
\forall i,j & \quad e_{ij} \in \N,\\
\forall i,j & \quad \alpha_{ij} \in \Q,\\
\forall i,j & 0 \leq e_{ij} \leq 1,\\
\forall i,j & 0 \leq \alpha_{ij} \leq e_{ij},\\
\forall i   & \sum_{j=1}^{H} e_{ij} = 1,\\
\forall j   & \sum_{i=1}^{T} \alpha_{ij} \leq 1,\\
\forall j   & \sum_{i=1}^{T} e_{ij} m_{i} \leq 1 \;\\
\forall i   & \sum_{j=1}^{H} \alpha_{ij} \leq \alpha_i \;\\
\forall i   & \sum_{j=1}^{H}\frac{\alpha_{ij}}{\alpha_{i}} \geq Y\\
            & \sum_{i=1}^{T} \sum_{j=1}^{H} (1 - \bar{e}_{ij})e_{ij} m_i \leq B
\end{eqnarray}

The objective, as in Section~\ref{sec.milp}, is to maximize $Y$. The only new
constraint is the last one. This constraint simply states that if job  $i$ is
assigned to a host that is different from the host to which it was assigned
previously, then it needs to be migrated. Therefore, $m_i$ bytes need to be
transferred. These bytes are summed over all jobs in the system to ensure that
the total number of bytes communicated for migration purposes does not exceed
$B$. Note that this is still a linear program as $\bar{e}_{ij}$ is not a
variable but a constant. Since for newly arrived jobs we set all $\bar{e}_{ij}$
values to $1$, we can see that they do not contribute to the migration cost.
Note that removing $m_i$ in the last constraint would simply mean that $B$ is a
bound on the total number of job migrations allowed during schedule adaptation.

We leave the development of heuristic algorithms for solving the above linear
program for future work. 

\subsection{Technology Issues for Resource Allocation Adaptation}

In the linear program in the previous section nowhere do we account for the time
it takes to migrate a job. While a job is being migrated it is presumably
non-responsive, which impacts the yield.  However, modern VM monitors support
``live migration'' of VM instances, which allows migrations with only
milliseconds of unresponsiveness~\cite{xen-migrate-nsdi05}. There could be a
performance degradation due to memory pages being migrated between two physical
hosts. Resource allocation adaptation also requires quick modification of the
CPU share allocated to a VM instance (assumption H5). We validate this
assumption in Section~\ref{sec.vmtech} and find that, indeed, CPU shares can be
modified accurately in under a second.

\section{Evaluation of the Xen Hypervisor}
\label{sec.vmtech}

Assumption H5 in Section~\ref{sec.assumptions} states that VM technology allows
for precise, low-overhead, and quickly adaptable sharing of the computational
capabilities of a host across CPU-bound VM instances.  Although this seems like
a natural expectation, we nevertheless validate this assumption with
state-of-the-art virtualization technology, namely, the Xen VM
monitor~\cite{barham03xav}. While virtualization can happen inside the operating
system (e.g, Virtual PC~\cite{Virtual_PC}, VMWare~\cite{VMWare}), Xen runs
between the hardware and the operating system. It thus requires either a
modified operating system (``paravirtualization'') or hardware support for
virtualization (``hardware virtualization''~\cite{HW_Virt}). In this work we use
Xen 3.1 on a dual-CPU 64-bit machine with paravirtualization. All our VM
instances use identical 64-bit Fedora images, are allocated 700MB of RAM, and
run on the same physical CPU. The other CPU is used to run the experiment
controller. All our VM instances perform continuous CPU-bound computations, that
is, $100\times 100$ double precision matrix multiplications using the LAPACK
DGEMM routine~\cite{LAPACK}.  

Our experiments consist in running from one to ten VM instances with specified
``cap values'', which Xen uses to control what fraction of the CPU is allocated
to each VM. We measure the effective compute rate of each VM instance (in number
of matrix multiplications per seconds). We compare this rate to the expected
rate, that is, the cap value times the compute rate measured on the raw
hardware. We can thus ascertain both the accuracy and the overhead of the
CPU-sharing in Xen. We also conduct experiments in which we change cap values
on-the-fly and measure the delay before the effective compute rates are in
agreement with the new cap values.

Due to space limitations we only provide highlights of our results and refer the
reader to a technical report for full details~\cite{ics_2008_05-01}. We found
that Xen imposes a minimal overhead (on average a 0.27\% slowdown). We also
found that the absolute error between the effective compute rate and the
expected compute rate was at most 5.99\% and on average 0.72\%. In terms of
responsiveness, we found that the effective compute rate of a VM becomes
congruent with a cap value less than one second after that cap value was
changed. We conclude that, in the case of CPU-bound VM instances, CPU-sharing in
Xen is sufficiently accurate and responsive to enable fractional and dynamic 
resource allocations as defined in this paper.  

\section{Related Work}
\label{sec.related}

The use of virtual machine technology to improve parallel job reliability,
cluster utilization, and power efficiency is a hot topic for research and
development, with groups at several universities and in industry actively
developing resource management systems~\cite{Usher, virtual_center,
XenEnterprise, Laura_vtdc2007, Ruth_icac2006}. This paper builds on top of such
research, using the resource manager intelligently to optimize a user-centric
metric that attempts to capture common ideas about fairness among the users of
high-performance systems.

Our work bears some similarities with gang scheduling.  However, traditional
gang scheduling approaches suffer from problems due to memory pressure and the
communication expense of coordinating context switches across multiple
hosts~\cite{gang_scheduling,gang_scheduling_mem}. By explicitly considering task
memory requirements when making scheduling decisions and using virtual machine
technology to multiplex the CPU resources of individual hosts our approach
avoids these problems.

Above all, our approach is novel in that we define and optimize for a
user-centric metric which captures both fairness and performance in the face of
unknown time horizons and fluctuating resource needs. Our approach has the
additional advantage of allowing for interactive job processes.  

\section{Limitations and Future Directions}
\label{sec.future}

In this work we have made two key assumptions. The first assumption is that VM
instances are CPU-bound (assumption H1), which made it possible to validate
assumption H5 in Section~\ref{sec.vmtech}. However, in reality, VM instances may
have composite needs that span multiple resources, including the network, the
disk, and the memory bus. The second assumption is that resource needs are known
(assumption H2). However, this typically does not hold true in practice as users
do not know precise resource needs of their applications. When assumption H1
does not hold, the challenge is to model composite resource needs in the
definition of the resource allocation problem, and to share these various
resources among VM instances in practice.

In practice, CPU and network resources are strongly dependent within a virtual
machine monitor environment. To ensure secure isolation, VM monitors interpose
on network communication, adding CPU overhead as a result. Experience has shown
that, because of this dependence, one can capture network needs in terms of
additional CPU need~\cite{schedulers:per}. Therefore, it should be
straightforward to modify our approach to account for network resource usage. In
terms of disk usage, we note that virtual cluster environments typically use
network-attached storage to simplify VM migration. As a result, disk usage is
subsumed in network usage. In both cases one should then be able to both model
and precisely share network and disk usage. Much more challenging is the
modeling and sharing of the memory bus usage, due to complex and deep memory
hierarchies on multi-core processors. However, current work on Virtual Private
Machines points to effective ways for achieving sharing and performance
isolation among VM instances of microarchitecture
resources~\cite{Nesbit_IEEE2008}, including the memory
hierarchy~\cite{Nesbit_ISCA2007}.

In terms of discovering VM instance resource needs, a first approach is to use
standard services for tracking VM resource usage across a cluster and collecting
the information as input into a cluster system scheduler (e.g., the XenMon VM
monitoring facility in Xen~\cite{xenmon}. Application resource needs inside a
VM instance can be discovered via a combination of introspection and
configuration variation. With introspection, for example, one can deduce
application CPU needs by inferring process activity inside of
VMs~\cite{antfarm-usenix06}, and memory pressure by inferring memory page
eviction activity~\cite{geiger-asplos06}. This kind of monitoring and inference 
provides one set of data points for a given system configuration. By varying the
configuration of the system, one can then vary the amount of resources given to
applications in VMs, track how they respond to the addition or removal of 
resources, and infer resource needs. Experience with such techniques in
isolation has shown that they can be surprisingly
accurate~\cite{antfarm-usenix06,geiger-asplos06}.  Furthermore, modeling
resource needs across a range of configurations with high accuracy is less
important than discovering where in that range the application experiences an
inflection point (e.g., cannot make use of further CPU or
memory)~\cite{sharp:sosp03,informed-sosp95}.

\section{Conclusion}
\label{sec.conclusion}

In this paper we have proposed a novel approach for allocating resources among
competing jobs, relying on Virtual Machine technology and on the optimization of
a well-defined metric that captures notions of performance and of fairness. We
have given a formal definition of a base problem, have proposed several
algorithms to solve it, and have evaluated these algorithms in simulation. We
have identified a promising algorithm that runs quickly, is on par with or
better than its competitors, and is close to optimal in terms of our objective
function. We have then discussed several extensions to our approach to solve
more general problems, namely when jobs are parallel, when the workload is
dynamic, when job resource needs are composite, and when job resource needs are
unknown.  

Future directions include the development of algorithms to solve the resource
allocation adaptation problem, and of strategies for estimating job resource
needs accurately. Our ultimate goal is to develop a new resource allocator as
part of the Usher system~\cite{Usher}, so that our algorithms and techniques can
be used as part of a practical system and evaluated in practical settings.

\bibliographystyle{ieee}
\bibliography{biblio}

\end{document}